\theoremstyle{plain} 
\newtheorem{theorem}{Theorem}
\newtheorem*{example*}{Example}
\newtheorem{definition}{Definition}
\newtheorem*{remark*}{Remark}
\newtheorem{corollary}[theorem]{Corollary}
\newcommand{\Od}[1]{{\mathcal{O}(#1)} }
\newcommand{\xMapsto}[2][]{\ext@arrow 0599{\Mapstofill@}{#1}{#2}}
\def\Mapstofill@{\arrowfill@{\Mapstochar\Relbar}\Relbar\Rightarrow}
\begin{document}
\title{Quantum algorithm for solving generalized eigenvalue problems with application to the Schr\"odinger equation}
\author{Grzegorz Rajchel-Mieldzio\'c}

\author{Szymon Pli\'s}
\author{Emil Zak}
\affiliation{BEIT sp.\ z o.o., ul.\ Mogilska 43, 31-545 Krak{\'o}w, Poland}

\date{\today}

\begin{abstract}
Accurate computation of multiple eigenvalues of quantum Hamiltonians is essential in quantum chemistry, materials science, and molecular spectroscopy. Estimating excited-state energies is challenging for classical algorithms due to exponential scaling with system size, posing an even harder problem than ground-state calculations.
We present a quantum algorithm for estimating eigenvalues and singular values of parameterized matrix families, including solving generalized eigenvalue problems that frequently arise in quantum simulations. Our method uses quantum amplitude amplification and phase estimation to identify matrix eigenvalues by locating minima in the singular value spectrum.
We demonstrate our algorithm by proposing a quantum-computing formulation of the pseudospectral collocation method for the Schrödinger equation.
We estimate fault-tolerant quantum resource requirements for the quantum collocation method, showing favorable scaling in the size of the problem $N$ (up to $\widetilde{\mathcal{O}}(\sqrt{N})$) compared to classical implementations with $\widetilde{\mathcal{O}}(N)$, for certain well-behaved potentials. Additionally, unlike the standard collocation method, which results in a generalized eigenvalue problem requiring matrix inversion, our algorithm circumvents the associated numerical instability by scanning a parameterized matrix family and detecting eigenvalues through singular value minimization. This approach is particularly effective when multiple eigenvalues are needed or when the generalized eigenvalue problem involves a high condition number. In the fault-tolerant era, our method may thus be useful for simulating high-dimensional molecular systems with dense spectra involving highly excited states, such as those encountered in molecular photodynamics or quasi-continuum regimes in many-body and solid-state systems.
\end{abstract}

\maketitle

\section{Introduction}
Quantum computing algorithms are sequences of multi-qubit state transformations in quantum devices. Certain quantum algorithms, such as Quantum Phase Estimation (QPE), Shor's algorithm, and Grover's algorithm~\cite{Nielsen_Chuang}, have a proven improved computational complexity scaling for specific problems compared to their classical counterparts~\cite{Lee2023}. In the long term, this theoretical advantage may enable large, fault-tolerant quantum devices to outperform classical computers, provided that current technological challenges in building quantum hardware are overcome. However, the utility of these devices will ultimately depend on the ability of quantum algorithms to address scientifically and industrially relevant problems.

Notably, simulating quantum-mechanical systems remains challenging for classical computers due to the curse of dimensionality, where both memory requirements and computational time scale exponentially with the number of particles or degrees of freedom. In such simulations, calculating the eigenvalues (spectra) of the Hamiltonian is often required. These eigenvalues and their corresponding eigenvectors are fundamental for studying natural phenomena and predicting material properties. For this reason, quantum Hamiltonian simulation is a promising area for demonstrating quantum computational advantage over classical architectures.

To date, much of the research has focused on ground state energy calculations~\cite{su2021,babbush2018,lee2021,burg2021,Lee2023,Deka2025}. However, this focus limits the scope of potential quantum advantage. Excited state calculations, i.e., computing eigenvalues beyond the ground state, become rapidly too complex for classical algorithms, suggesting that quantum advantage may be more pronounced in such scenarios.
As an example, determining multiple eigenvalues of a quantum-mechanical Hamiltonian is essential for modeling physical phenomena such as intraband vibrational relaxation~\cite{Richerme2023,Mariano2025}, absorption spectra~\cite{Bunker1999}, thermodynamic properties of crystals~\cite{Kittel1965}, and (photo)chemical reaction pathways~\cite{Weight2023,Mukherjee2022}, among others. In these cases, standard QPE becomes inefficient. The number of QPE calls required to resolve multiple excited state energies scales poorly, and the circuit complexity for preparing suitable inputs tailored to excited states increases rapidly. For this reason, beyond ground-state calculations, further theoretical work is needed to clarify potential quantum advantage.

In this work, we propose a quantum algorithm for computing multiple eigenvalues of general operators. Our central idea is to encode a grid of matrix parameters $\alpha$ into a quantum superposition, thereby defining a family of single-parameter-dependent matrix representations $\widetilde{\mathbf{M}}(\alpha)$ of a general operator. By combining QPE with quantum amplitude amplification (AA), we selectively enhance the probability of measuring quantum states corresponding to eigenvalues of $\widetilde{\mathbf{M}}(\alpha)$ within a prescribed interval, from which the associated parameter values $\alpha$ can be extracted.

In the context of the Schrödinger equation, the matrix family takes the form $\mathbf{M}(\alpha)=\mathbf{H}-\alpha\mathbf{S}$, where $\mathbf{H}$ is the Hamiltonian, $\mathbf{S}$ is the Gram (overlap) matrix associated with a chosen basis, and $\alpha$ represents a candidate eigenvalue. By qubit-encoding $\alpha$ and amplifying the amplitudes of quantum states corresponding to values of $\alpha$ for which $\widetilde{\mathbf{M}}(\alpha)$ has singular values within $\varepsilon$ from 0, we obtain a quadratic reduction in Clifford+T gate complexity compared with a direct application of QPE.

Unlike conventional ground-state Hamiltonian simulation approaches based on QPE \cite{babbush2018}, which rely critically on high overlap between an initial trial state and the target eigenstate \cite{Shao_2022,Kerzner_2024,Low_2024,Ding2024}, our method scans a family of matrices to detect eigenvalues or singular values without requiring significant initial overlap. A related strategy was proposed by Kerzner et al. \cite{Kerzner_2024}, who used amplitude amplification and estimation to identify matrix eigenvalues below a specified threshold. By contrast, our approach is designed to determine parameter values for which a matrix in a single-parameter family possesses an eigenvalue within a chosen interval.

The key conceptual distinction from earlier eigenvalue-estimation techniques \cite{Jin_2020,Chen_2024,Kerzner_2024} is that the parameter of interest, such as the energy, is encoded directly in the quantum state rather than in the measured eigenvalue register. Consequently, the probability of observing a given bitstring corresponding to an approximate energy is proportional to the squared, amplified amplitude. Within this framework, interesting eigenvalues $\alpha$ correspond to minima in the singular-value landscape of the matrices $\widetilde{\mathbf{M}}(\alpha)$, and the algorithm enhances the measurement probability associated with these minima.

An important practical feature of our algorithm is that it does not incur additional overhead from repeated sampling to estimate eigenvalues: each execution yields a single eigenvalue estimate with precision $\varepsilon$. This contrasts with variational quantum eigensolver (VQE) approaches, where the number of measurement shots scales unfavorably and often limits practical performance and demonstrations of quantum advantage \cite{Zimboras,Tilly2022}.

As a concrete application, we apply the algorithm to the Schrödinger equation. This leads to our second contribution: the formulation of a quantum-computational analogue of the collocation method for solving partial differential equations \cite{Peet1989,Helgaker2000,Carrington2017,Manzhos2023}. Collocation is a pseudospectral (grid-based) approach that reformulates the differential Schrödinger equation as a generalized, generally rectangular and non-symmetric, matrix eigenvalue problem. Such generalized eigenvalue problems are popular in quantum physics and chemistry and therefore have broad relevance.

Generalized eigenvalue problems can be written as
\begin{equation}
    \mathbf{H}\mathbf{v} = \mathbf{S}\mathbf{v}\mathbf{E},
    \label{eq:gen_eval}
\end{equation}
where $\mathbf{S}$ is the Gram matrix \cite{Helgaker2000} and $\mathbf{v}$ is the matrix of eigenvectors. These problems are typically more challenging than standard eigenvalue problems \cite{Hu1998,Brown2015,Hashemi2022}, in part because they require inversion or factorization of the Gram matrix. Inverting an $N\times N$ matrix requires $\mathcal{O}(N^3)$ floating-point operations, while the condition number of $\mathbf{S}$ controls numerical stability and may necessitate high-precision arithmetic \cite{recipes}. Such problems occur frequently in quantum chemistry \cite{Ford1974,Helgaker2000,Dusson} and in vibrational and vibronic structure calculations \cite{Carrington2017,Szalay2011,gaborig,Poirier2000,Brown2015,Richings2015,Burghardt2008,Hu1998}, where computational convenience often comes at the expense of a non-identity overlap matrix.
Collocation methods based on Gaussian basis functions provide a flexible and efficient framework for solving the Schrödinger equation by avoiding high-accuracy quadrature and explicit integral evaluation, and by allowing unconstrained grid selection \cite{Manzhos2023}. However, they often produce ill-conditioned Gram matrices due to near-linear dependence of basis functions, leading to large condition numbers and limiting scalability. As a result, despite progress \cite{Carrington2017,Zak2019,Carrington2021,Simmons2023}, the classical computational cost of solving the resulting eigenvalue problems still suffers from the curse of dimensionality.

Our approach circumvents explicit matrix inversion by instead studying a family of parameterized matrices and using the smallest singular value as a diagnostic of eigenvalue accuracy. Scanning a grid of candidate parameters provides an alternative route to directly solving generalized eigenvalue problems, applicable not only to the Schrödinger equation but also to broader classes of single-parameter matrix families, including nonlinear parameterizations. In Sec.~\ref{sec:application}, we compare our method with classical collocation approaches and highlight its advantages.

In Sec.~\ref{subsec:algorithms_complexities}, we estimate the quantum resources required to implement our algorithm on a fault-tolerant quantum computer, including logical qubit counts and T-gate complexity. These requirements are compared with classical resources needed for SVD-based energy-grid scanning methods, allowing us to identify regimes in which quantum advantage may emerge. By applying our algorithm to the collocation formulation of the Schrödinger equation, we demonstrate favorable scaling in both memory usage and gate complexity relative to the best-known classical approaches \cite{Carrington2017}, owing to quantum parallelism introduced by amplitude amplification on a superposed quantum state.

The remainder of the paper is organized as follows. In Sec.~\ref{sec:general_method}, we introduce the general quantum algorithm for eigenvalue estimation via parameterized matrix families. In Sec.~\ref{sec:application}, we describe application of our method to the collocation formulation of the Schrödinger equation. We then compare classical and quantum resources and performance in different regimes, followed by a discussion in Sec.~\ref{sec:discussion} and concluding remarks in Sec.~\ref{sec:conclusions}.

\section{Quantum algorithm for landscape scanning}\label{sec:general_method}
In order to devise an algorithm that solves the generalized eigenvalue problem presented in Eq.~\eqref{eq:gen_eval}, we first consider a broader problem: finding the \textit{pseudospectrum} of an operator, defined as follows:

\begin{definition}[\cite{Hogben_2013}]
The pseudospectrum $\Lambda_\varepsilon$ (more precisely, the $\varepsilon$-pseudospectrum) of an operator $\mathbf M$ is defined as the set of all numbers $\lambda$, for which there exists an operator $\mathbf X$ such that $||\mathbf X - \mathbf M|| \leq \varepsilon$ and $\lambda$ is an eigenvalue of $\mathbf X$.
\end{definition}

In the pseudoscpectrum finding context we consider an $N$-dimensional one-parameter family of 
matrices ${\mathbf M(\alpha)}$, with $\alpha \in A$, where $A$ is a connected subset of the real line, $A = [a, b] \subset \mathbb{R}$, where we denote the range $S=b-a$. Matrices can be general. In this work, we consider two limiting cases of square Hermitian and rectangular~\footnote{In this case, $N$ is the larger of the two dimensions.} matrices.
Our goal is to identify at least one matrix $\mathbf M(\alpha_0)$ that possesses a specific eigenvalue~\footnote{Later on, we generalize this problem also to singular values.} $\lambda_0$ determined with precision $\epsilon$; that is, $\lambda_0$ belongs to the pseudospectrum of $\mathbf M(\alpha_0)$, $\lambda_0 \in \Lambda_\varepsilon(\mathbf M(\alpha_0))$~\footnote{For normal matrices, $\varepsilon$-pseudospectrum are the regions $\pm\varepsilon$ around their spectrum; therefore, this notion is aligned with our goal of finding numbers close to the exact eigenvalues.}.
We assume that such an $\alpha_0$ exists.

\textbf{Square Hermitian matrices.}
First, we consider a family of square Hermitian matrices $\mathbf M(\alpha)$ that admit a truncated series expansion of the form
\begin{equation}
\mathbf M(\alpha) = \sum_{j=0}^J \alpha^j \mathbf M_j,
\label{eq:M-family}
\end{equation}
In the particular case of $J = 1$, the generalized eigenvalue problem written in Eq.~\eqref{eq:gen_eval} can be cast as a pseudospectrum-finding problem, by taking $\mathbf M(\alpha) = \mathbf{H} - \alpha \mathbf{S}$ and searching for an $\alpha$ that minimizes magnitude of the smallest eigenvalue $|\lambda_0|$ of $\mathbf{M}(\alpha)$. We discuss this special case in Sec.~\ref{sec:quantum-landscape}.

For implementation on a quantum computer, we discretize the parameter $\alpha$ over an equidistant grid ${\alpha}_{i \in {1, \ldots, K}} = \{a, a+\delta, \ldots, b-\delta, b\}$.
The family of matrices $\mathbf M(\alpha)$ given in Eq.~\eqref{eq:M-family} can be  then conveniently extended as follows:
\begin{equation}
\label{eq:firstMtylda}
\begin{split}
\widetilde{\mathbf M} &= \sum_{j=0}^J \mathbf D^j(\alpha) \otimes \mathbf M_j = \\
&= \mathbb{I} \otimes \mathbf M_0 + \mathbf D(\alpha) \otimes \mathbf M_1 + \mathbf D^2(\alpha) \otimes \mathbf M_2 + \ldots,
\end{split}
\end{equation}
where $\mathbf D(\alpha) \coloneqq \text{diag}(\alpha_1, \ldots, \alpha_K)$ is a diagonal matrix transforming states inside a $k\coloneqq \log K$-qubit Hilbert space $\mathcal{H}_k$, i.e., $\mathbf D^j(\alpha): \mathcal{H}_k\rightarrow \mathcal{H}_k$, where $\mathcal{H}_k = \mathbb{C}^{2^k}$. Similarly, $M_j: \mathcal{H}_n\rightarrow \mathcal{H}_n$, where $\mathcal{H}_n = \mathbb{C}^{2^n}$, such that $\widetilde{\mathbf M}: \mathcal{H}\rightarrow \mathcal{H}$, i.e., $\mathcal{H}=\mathcal{H}_k\otimes\mathcal{H}_n$, and $\dim \mathcal{H}=N\cdot K$. 
We aim to find the eigenvalues of $\mathbf M(\alpha)$ in a specific region $I = [\lambda_0 - \varepsilon,\lambda_0 + \varepsilon]$ and the corresponding values for $\alpha$, as schematically depicted in Fig.~\ref{fig:landscape_sketch}. 

   \begin{figure}[h!]
       \centering
       \includegraphics[width=1\linewidth]{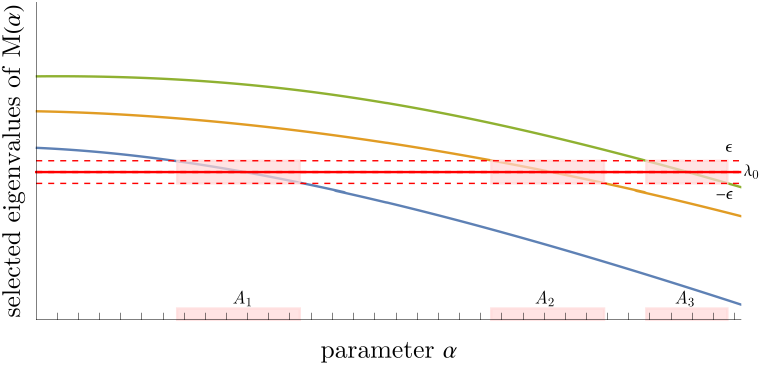}
       \caption{A sketch of the generic landscape of the eigenvalues for a one-parameter family of Hermitian matrices $\mathbf M(\alpha)$. Here, for all equidistant grid values of $\alpha \in A = A_1 \cup A_2 \cup A_3$, the matrices $\mathbf M(\alpha)$ have eigenvalue $\lambda_0$ with accuracy $\varepsilon$.}
       \label{fig:landscape_sketch}
   \end{figure}

Our proposed algorithm is summarized in the following theorem:
    \begin{theorem}\label{thm:quantum_algorithm_square}
        Consider a discretized one-parameter family of Hermitian matrices $\mathbf M(\alpha)$ on a grid formed by $K$ points. 
        If the dependence on $\alpha$ can be written as a finite series expansion $\mathbf M(\alpha) = \sum_{j=0}^J \alpha^j \mathbf M_j$, for some matrices $\mathbf M_j$, there exists a quantum algorithm that finds a set of the parameters $\alpha'$ such that matrices $\mathbf M(\alpha')$ have eigenvalues in a region $[\lambda_0-\varepsilon,\lambda_0+\varepsilon]$ with $\widetilde{\mathcal{O}}(J\zeta\sqrt{NK}/\varepsilon)$ calls to block-encoded matrices $\mathbf M_j$, which all have their max-norms upper-bounded by $\zeta$.
    \end{theorem}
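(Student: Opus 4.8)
The plan is to collapse the whole parameterized family into a single block-diagonal operator, estimate its eigenvalues with qubitization-based quantum phase estimation (QPE) run on a coherent maximally mixed input, and then use amplitude amplification to pay only the square root of the ambient Hilbert-space dimension $NK$. Throughout, $\widetilde{\mathcal{O}}$ absorbs factors that are polylogarithmic in $N$, $K$, $1/\varepsilon$ and in the target failure probability $\delta$; the precise power of $J$ depends on how access to the several oracles $\mathbf{M}_j$ is counted, and we count one invocation of the linear-combination-of-unitaries (LCU) select circuit as a single call.

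First I would record the structural observation that, because $\mathbf{D}(\alpha)=\mathrm{diag}(\alpha_1,\dots,\alpha_K)$ is diagonal in the computational basis of $\mathcal{H}_k$, the extended matrix of Eq.~\eqref{eq:firstMtylda} is block diagonal, $\widetilde{\mathbf{M}}=\bigoplus_{i=1}^{K}\mathbf{M}(\alpha_i)$. Consequently every eigenvector of $\widetilde{\mathbf{M}}$ factorizes as $|e_i\rangle_k\otimes|v\rangle_n$ with $\mathbf{M}(\alpha_i)|v\rangle=\lambda|v\rangle$, so reading the $\mathcal{H}_k$ register of a returned eigenvector immediately names an admissible grid value $\alpha'=\alpha_i$. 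I would then assemble a block-encoding of $\widetilde{\mathbf{M}}$ by an LCU over $j=0,\dots,J$, using the given block-encodings of the $\mathbf{M}_j$ (whose normalization constants are controlled by the common max-norm bound $\zeta$) together with the elementary block-encodings of the real diagonal operators $\mathbf{D}^j(\alpha)$. This uses $\mathcal{O}(J)$ internal oracle touches per invocation and produces subnormalization $\lambda_{\widetilde{M}}=\mathcal{O}(J\zeta)$, after rescaling $\alpha$ so that $A\subseteq[-1,1]$ and the range constants $a,b$ drop out.

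Next I would qubitize this block-encoding and run QPE on the resulting walk operator, fed with the maximally entangled state $\tfrac{1}{\sqrt{NK}}\sum_i|i\rangle_{\mathrm{sys}}|i\rangle_{\mathrm{anc}}$ --- a purification of the maximally mixed state on the $NK$-dimensional system register, chosen so that the entire procedure is one unitary to which amplitude amplification applies. After QPE the joint state is a superposition over the eigenpairs $(\lambda_\ell,|w_\ell\rangle)$ of $\widetilde{\mathbf{M}}$, each contributing $|w_\ell\rangle\otimes|\widehat{\lambda}_\ell\rangle$ with weight proportional to $\sqrt{\mathrm{mult}(\lambda_\ell)/(NK)}$, where $\widehat{\lambda}_\ell$ is an $\varepsilon$-accurate estimate once we spend $\widetilde{\mathcal{O}}(\lambda_{\widetilde{M}}/\varepsilon)=\widetilde{\mathcal{O}}(J\zeta/\varepsilon)$ walk-operator applications, hence $\widetilde{\mathcal{O}}(J\zeta/\varepsilon)$ calls to the $\mathbf{M}_j$ block-encoding. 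A comparator on the QPE register now flags exactly the estimates lying in $I=[\lambda_0-\varepsilon,\lambda_0+\varepsilon]$; the flagged amplitude is $\sqrt{p}$ with $p=\sum_{\lambda_\ell\in I}\mathrm{mult}(\lambda_\ell)/(NK)\geq 1/(NK)$, the bound using the standing hypothesis that some grid matrix $\mathbf{M}(\alpha')$ does have an eigenvalue in $I$. Fixed-point amplitude amplification (Yoder et al.) with the a priori bound $p\geq 1/(NK)$ then raises the flagged probability to $1-\delta$ in $\widetilde{\mathcal{O}}(\sqrt{NK})$ rounds, each round being one QPE, one inverse QPE, and two reflections; a terminal measurement of the flag, the QPE register, and the $\mathcal{H}_k$ register returns a valid pair $(\lambda',\alpha')$. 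Iterating, or running standard amplitude amplification with an a priori unknown number of solutions (Brassard et al.), then enumerates the full set of admissible $\alpha'$. Multiplying the $\widetilde{\mathcal{O}}(\sqrt{NK})$ rounds by the $\widetilde{\mathcal{O}}(J\zeta/\varepsilon)$ per-round cost gives the claimed $\widetilde{\mathcal{O}}(J\zeta\sqrt{NK}/\varepsilon)$.

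I expect the main obstacle to be the interaction between QPE's finite resolution and the hard edges of the window $I$: eigenvalues lying within $\mathcal{O}(\varepsilon)$ of $\lambda_0\pm\varepsilon$ are flagged only with intermediate probability, which both contaminates the ``good'' subspace and tilts the amplitude-amplification rotation away from the clean two-dimensional picture. I would absorb this in the usual way, either by softening the comparator to a window of width $\mathcal{O}(\varepsilon)$ or by taking the median of $\mathcal{O}(\log(1/\delta))$ independent QPE estimates before comparing, at the cost of an $\mathcal{O}(\varepsilon)$ widening of the interval that is reported and a polylogarithmic overhead, all swept into $\widetilde{\mathcal{O}}$. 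The remaining ingredients --- not knowing $p$ exactly, keeping the two reflections coherent, and uncomputing scratch registers --- are routine once $p\geq 1/(NK)$ is available. One further point I would verify is that the grid is fine enough relative to $S=b-a$ and the Lipschitz constant of the eigenvalue branches of $\mathbf{M}(\alpha)$ that the assumed exact eigenpair at $\alpha_0$ forces at least one grid point $\alpha'$ into $I$ --- this is exactly what makes the existence hypothesis usable and the bound $p\geq 1/(NK)$ legitimate.
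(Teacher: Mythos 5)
Your proposal follows essentially the same route as the paper: prepare the maximally entangled state over the $NK$-dimensional space, run QPE (with a median over repeated estimates) on the block-diagonal $\widetilde{\mathbf M}$ built by LCU from the block-encoded $\mathbf M_j$ and $\mathbf D^j(\alpha)$, mark estimates in the window with a comparator oracle, and amplify with $\widetilde{\mathcal{O}}(\sqrt{NK})$ rounds. The only substantive difference is the treatment of the unknown number of marked states --- you invoke fixed-point amplitude amplification with the a priori bound $p\geq 1/(NK)$, whereas the paper applies a randomly chosen number of standard Grover iterations and shows the expected success probability is bounded below by a constant ($\approx 0.435$); both are valid and give the same overall complexity.
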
 

\begin{proof}

Quantum circuit representing our key elements of the algorithm is shown in Figure~\ref{fig:circuit-A}.
Our procedure begins by constructing the following state, similar to that in Ref.~\cite{Kerzner_2024}:
\begin{equation}\label{eq:init_state}
\begin{split}
    \ket{\Psi}&=\mathrm{PREP}\left(\ket{0}_{k}\otimes \ket{0}_n\right)=\frac{1}{\sqrt{NK}}\sum_{i=0}^{2^{(n+k)}-1}\ket{\psi_i}\ket{\bar\psi_i} \\
    &= \frac{1}{\sqrt{NK}}\sum_{i=0}^{2^{(n+k)}-1}\ket{ii} \in\mathcal{H}\otimes\mathcal{H}
\end{split}
\end{equation}
where $\ket{\psi_i} = \ket{\alpha_i}_k\otimes\ket{\phi_i}_n$ is the $i$-th eigenstate of $\widetilde{\mathbf M}$, as defined in Eq.~\eqref{eq:firstMtylda}, and $\ket{ii}$ is composite state keeping values of integer indices $i=1,2,...,N K$.
The operator $\mathrm{PREP}$ acts as a preparation phase of a maximally entangled state between these two registers.
The latter equality in Eq.~\eqref{eq:init_state} holds for any choice of basis $\ket{\psi_i}$.
In our notation, eigenvector $\ket{\psi}$ corresponding to a priori chosen eigenvalue $\lambda_0$ of $\widetilde{\mathbf M}$ is denoted as $\ket{\psi}=\ket{\psi_0}$.
The statevector given in Eq.~\eqref{eq:init_state} can be straightforwardly constructed with a circuit consisting of Clifford CNOT and Hadamard gates.

\onecolumngrid
\begin{center}
\begin{figure}[H]
       \centering
    \includegraphics[width=\linewidth]{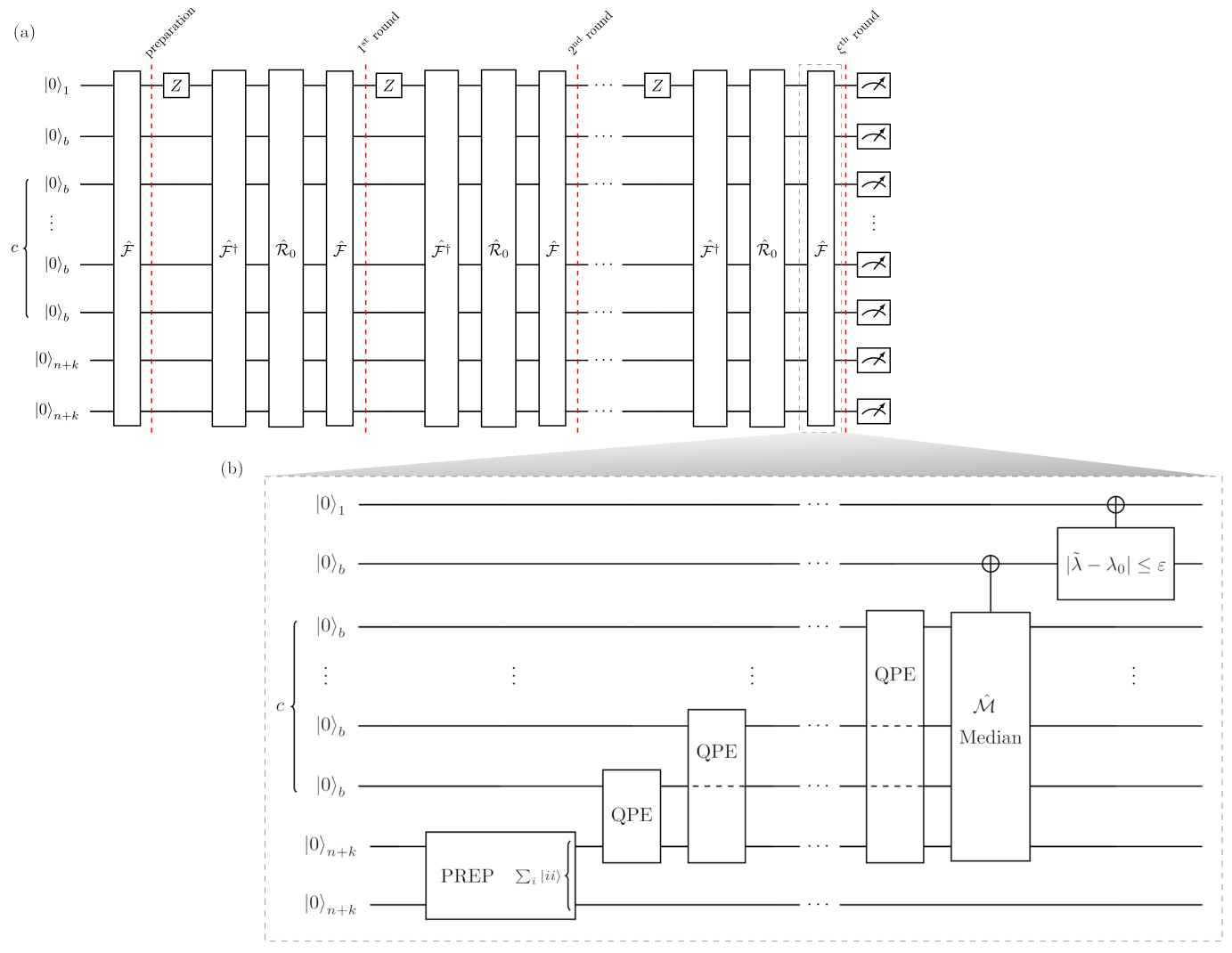}
    \caption{A quantum circuit realizing the search for eigenvalues of a matrix that are close to the target value $\lambda_0$, effectively constructing the algorithm described in Theorem~\ref{thm:quantum_algorithm_square} for $N = 2^n$ and $K = 2^k$.
    The costs of the block-encoding are assumed to depend on the technique, and for example, for $d$-sparse they will scale as $\mathcal{O}(\log NK+1+\beta)$, where $\beta$ is the precision for the elements of the matrix $\mathbf M$ in the chosen block-encoding.
    The entire circuit (a) is a modification of the standard Grover algorithm. 
    First, via $c$  steps of Quantum Phase Estimation, we prepare a state in an equal superposition of the eigenvectors and the corresponding eigenvalues on $c+1$ registers of $b$ qubits each. 
    The ``median'' register stores the best approximation to the true eigenvalues~\cite{Kerzner_2024}.
    Additionally, we apply the region oracle on the uppermost qubit, which gives $\ket{1}$ if the corresponding eigenvalue is in the region of interest and $\ket{0}$ otherwise.
    This part of the procedure is called $\hat{\mathcal{F}}$, depicted on part (b).
    Observe that its action is equivalent to the tensor product of Hadamard gates in the standard Grover algorithm.
    Having prepared the state, in the first round, the oracle $Z$ acting on the qubit with the characteristic function $\chi$ is applied. 
    What follows is the reflection over the equal superposition state, realized via $\hat{\mathcal{F}}^\dagger\hat{\mathcal{R}}_0\hat{\mathcal{F}}$, where $\hat{\mathcal{R}}_0 = 2\ket{0}\!\!\bra{0}^{\otimes 2(n+k)+b(c+1)+1} - \mathbb{I}$. 
    Finally, after $\xi$ such rounds, the qubits are measured. 
    The measurement of the $k$ bottom qubits, corresponding to the eigenvector registers, extracts the information about the relevant eigenvector $\ket{\psi_i} = \ket{\alpha_i}_k\otimes\ket{\phi_i}_n$ of matrix $\widetilde{\mathbf M}$, as defined in Eq.~\eqref{eq:firstMtylda}.
    Therefore, this determines the value $\alpha_i$ that yields matrices $\mathbf{M}(\alpha_i)$ with eigenvalues in the target region.
    }
    \label{fig:circuit-A}
\end{figure}
\end{center}
\twocolumngrid

Subsequently, we initialize $c+1$ copies of $b$-qubit registers in the state $\ket{0}_{b}$.
The number of registers $b$ is related to the desired accuracy, $b = \mathcal{O}\big(\log (1/\varepsilon)\big)$.
These registers will be used for computing candidate eigenvalues of $\widetilde{\mathbf M}$, as shown in Figure~\ref{fig:circuit-A}. The top $b$-qubit register in  Figure~\ref{fig:circuit-A} keeps the final selected candidate eigenvalue $\ket{\tilde{\lambda}}_{b}$, whereas the remaining $c$ clock registers are utilized to store output from a sequence of QPE circuits, i.e., candidate eigenvalues $\eta_1,...,\eta_c$~\footnote{More specifically, for each register, with the highest probability, the outcome will be the true eigenvalue $\lambda$. However, there will also be a small irrelevant part $\ket g$, that will eventually not contribute much to the final estimation of the eigenvalue $\ket\eta = \sqrt{1-\gamma}\ket{\lambda} + \gamma \sum_i \ket{g_i}$, as the coefficient $\gamma \approx 0$.}.
The purpose of the $c$-fold QPE execution is to improve the probability of measuring correct eigenvalues. 
To achieve a target failure probability $\omega$, it suffices to choose $c = \mathcal{O}\big(\log (1/\omega)\big)$, as discussed in App.~\ref{app:accuracy}.
The state representing the median eigenvalue $\ket{\tilde{\lambda}}_{b}$ is constructed from the composite of candidate eigenvalue states $\ket{\eta}_{b}$ generated by each of the QPE circuits. Quantum circuit $\hat{\mathcal{M}}$ depicted in Figure~\ref{fig:circuit-A}  returns the median $\tilde{\lambda}$ of the set $\lbrace\eta_1,...,\eta_c\rbrace$.

The median circuit $\hat{\mathcal{M}}$ can be implemented in a number of ways, including Grover search, the cost of which is negligible~\cite{Kerzner_2024}.  
Choosing the majority voting technique implementing an approximate median, the additional gate costs would be linear in the number of bits $b$ and the number of registers $c$, $\mathcal{O}(bc)$. Thus, the asymptotic scaling of the final circuit is not affected. Generally, following the discussion given in Ref.~\cite{Kerzner_2024}, the number of clock registers is $c=\mathcal{O}(\log \frac{1}{p})$ where $p$ is the probability of not measuring the correct eigenvalue in QPE. 
Note that QPE circuits shown in Figure~\ref{fig:circuit-A} need only to act on one of the states in the pair $\ket{\psi_i}\ket{\bar\psi_i}$ in order to produce an estimate for the eigenvalue of $\widetilde{\mathbf M}$. We denote with $\hat{\mathcal F}$ the quantum circuit for the state preparation given in Eq.~\eqref{eq:init_state}, the successive use of $c$ QPE circuits $\hat{Q}$ and the median circuit $\hat{\mathcal{M}}$ displayed in Figure~\ref{fig:circuit-A}(b), and call $\hat{\mathcal F}$ \textit{eigenvalue finder} circuit. 

For executing each QPE circuit, one needs to implement the evolution unitary $U_{\widetilde{\mathbf M}}(t) = e^{it\widetilde{\mathbf M}}$, which requires $\mathcal O (||\widetilde{\mathbf M}||t+\log(1/\varepsilon))$ calls to block-encoded matrix $\widetilde{\mathbf M}$~\cite{Low_2017}. The number of qubits $b$ in each QPE estimation register shown in Figure~\ref{fig:circuit-A} is determined by target eigenvalue precision $b = \log(1/\varepsilon)$. We adopt the $d$-sparse model for block-encoding individual $\mathbf M_j$ matrices and separately the $\mathbf D^j(\alpha)$ matrices, denoted $\mathcal{B}[\mathbf M_j]$ and $\mathcal{B}[\mathbf D^j]$, respectively. Each block-encoding is associated with an appropriate scaling constant as follows:
\begin{equation}
    \left(\bra{0}_a\otimes \mathbb{I}\right)\mathcal{B}[\mathbf{A}]\left(\ket{0}_a\otimes \mathbb{I}\right) = \frac{\mathbf{A}}{\zeta_A},
\end{equation}
where index $a$ denotes an ancilla register and $\zeta_A$ is a number dependent on the block encoding method. For the $d$-sparse model, $\zeta_A= d||\mathbf{A}||_\textrm{max}$, where $d$ is the maximum number of non-zero elements in any given row of $\mathbf{A}$ and $||\mathbf{A}||_\textrm{max}$ is the maximum-norm of $\mathbf{A}$. Here we assume that matrix $\widetilde{\mathbf M}$ is sparse, i.e., it contains $d=\mathcal{O}(\log(N))$ non-zero entries in each row and column. Note that the presence of energy grid matrices $\mathbf D(\alpha)$ given in Eq.~\eqref{eq:firstMtylda} does not change the matrix sparsity. The detailed costs of the block-encoding circuit can be found, e.g., in Ref.~\cite{camps}. Here we remain agnostic to the block-encoding circuit construction details, only assuming the $d$-sparse scheme for block encoding individual terms $\mathbf D^j(\alpha)$ and  $\mathbf M_j$, while linear combination of unitaries is assumed for block-encoding the sum of block-encoded products $ \mathbf D^j(\alpha) \otimes M_j$. Then the overall block-encoding scaling constant for $\widetilde{\mathbf M}$ is given by $\zeta=\sum_{j=0}^J \zeta(D^j)\zeta(M_j)$, which is upper bounded by $J \max_j\lbrace\zeta(D^j)\zeta(M_j)\rbrace$.
Block-encoding circuit of $\widetilde{\mathbf M}$ requires  $\mathcal{O}(\log (NK)+\log J)$ qubits using the combination of the $d$-sparse and LCU encoding methods~\cite{camps}. 
For further discussion of block-encodings, we refer the reader to Sec.~\ref{subsec:algorithms_complexities}.

The input statevector entering the QPE sequence in eigenvalue \textit{finder} $\hat{\mathcal{F}}$ circuit shown in Figure~\ref{fig:circuit-A}(b) can be written as:
\begin{equation}
    \ket{\Psi}\ket{0}_{(c+1)b}\ket{0}_1.
    \label{eq:gamma0}
\end{equation}
Upon a successful execution of the \textit{finder} circuit, the output state encoding eigenvalues of $\widetilde{\mathbf M}$ can be written in the form:
\begin{widetext}
\begin{equation}
      \ket{\Psi}\ket{0}_{(c+1)b}\ket{0}_1
\xrightarrow{\hat{\mathcal{F}}}  \ket{\Gamma^{(0)}}=\frac{1}{\sqrt{NK}}\sum_{l=0}^{NK-1}\ket{\alpha_l}\ket{\phi_l}\ket{\bar{\alpha}_l}\ket{\bar{\phi}_l}\ket{\eta_l^1}_b\cdots\ket{\eta_l^c}_b\ket{\tilde{\lambda}_l}_{b}\ket{0}_1,
    \label{eq:gamma-1}
\end{equation}
\end{widetext}
where each $\ket{\eta_l}$ corresponds to a superposition over states close to the true eigenvalue $\lambda_l$.
From here on, for clarity, we drop the complex conjugated copy of $\ket{\alpha_l}\ket{\phi_l}$ as well as all the eigenvalue registers apart $\ket{\eta_l^1}_b\cdots\ket{\eta_l^c}_b$ from the median one $\ket{\tilde{\lambda}_l}_{b}$, as defined in Eq.~\eqref{eq:gamma-1}. Our next aim is to increase the amplitudes of all states $\tilde{\lambda}_l$ that satisfy the inequality:
\begin{equation}
    |\tilde{\lambda}_l-\lambda_0|< \varepsilon
    \label{eq:inequality}
\end{equation}
For this purpose, we construct an oracle that marks states satisfying Eq.~\eqref{eq:inequality} by transforming an additional qubit register to state $\ket{1}$ when Eq.~\eqref{eq:inequality} is satisfied and $\ket{0}$ otherwise.  We thus defined the characteristic function  $\chi(x)\in\{0,1\}$ and $\chi(x)=1$ iff $|x-\lambda_0|<\varepsilon$ and will call the associated qubit the \textit{characteristic function qubit}, already shown in Figure~\ref{fig:circuit-A} and Eq.~\eqref{eq:gamma0}.
Let us denote all states satisfying  Eq.~\eqref{eq:inequality} as $\ket{\chi_+} = \sum_{l\in \Omega_{\varepsilon}}\ket{\alpha_l}\ket{\phi_l}\ket{\tilde{\lambda}_l}_{b}\ket{\chi(\tilde{\lambda}_l)}_1$, where $\Omega_{\varepsilon} = \lbrace l: \widetilde{\mathbf M}(\alpha)\ket{\alpha_l}\ket{\phi_l}=\lambda_l\ket{\alpha_l}\ket{\phi_l}  \,\,\text{and}\,\,\,     |\lambda_l-\lambda_0|< \varepsilon \rbrace $~\footnote{In the above, we disregard the registers related to the QPE, as these do not change the value of the characteristic function qubit.}.
For amplifying the amplitude of state $\ket{\chi_+}$, we construct oracle $\hat{R}$ and diffusion operator $\hat{D}$ in the following way:
\begin{align}
    \hat{R} &= \mathbb{I}-2\ket{\chi_+}\!\!\bra{\chi_+} = \mathbb{I}\otimes\mathbb{I}\otimes\cdots \otimes \mathbb{I} \otimes Z,\\
    \hat{D} &= 2\ket{\Psi}\!\!\bra{\Psi}- \mathbb{I} = \hat{\mathcal{F}}^\dagger \hat{\mathcal{R}}_0  \hat{\mathcal{F}}.
    \label{eq:grover}
\end{align}
From the above, we form the Grover iterate operator $\hat{G}= \hat{D}\hat{R}$, which is applied $\xi$-times. 
In the above, the oracle acts as $Z$ operator on the characteristic function qubit, while $\hat{\mathcal{R}}_0 = \ket{0}\!\!\bra{0} - \mathbb{I}$ is a reflection in the computational basis.
Therefore, the action of the finder operator $\hat{\mathcal{F}}$ can be thought of as a change of the basis.

After the $k$-th amplitude amplification iteration, the state of the system can be written as:
\begin{widetext}
\begin{equation}
\begin{split}
    \ket{\Gamma^{(k)}} &= c^{(k)}\left(\sum_{l\in \Omega_{\varepsilon}}\ket{\alpha_l}\ket{\tilde{\lambda}_l}\ket{r_l}\ket{1}\right) + \sqrt{1-(c^{(k)})^2}\left(\sum_{l\notin = \Omega_{\varepsilon}}\ket{\alpha_l}\ket{\tilde{\lambda}_l}\ket{r_l}\ket{0}\right) \equiv   c^{(k)}\ket{\chi_+} +  \sqrt{1-(c^{(k)})^2}\ket{\chi_-},
\end{split}
\end{equation}
\end{widetext}
where $\ket{r_l}$ represents the state of all remaining qubits~\footnote{In principle, due to the not-exact accuracy of QPE, the median qubit will also contain contributions from other states, but we drop this for clarity.}.
Since we do not know the number of eigenvalues satisfying Eq.~\eqref{eq:inequality}, we do not know the amplitude of the \textit{good} state $\ket{\chi}_+$. This means that the straightforward strategy of applying $\big\lfloor\frac{\pi}{4}\sqrt{NK}\big\rfloor$ times the Grover rotation might ``overamplify'' the desired components, leading to a non-optimal amplitude.
For this reason, we follow a modified probabilistic strategy, where in each round we choose a number $r\in [0,1]$. 
Then, we apply Grover rotation $\big\lfloor r \frac{\pi}{4}\sqrt{NK}\big\rfloor$ times. 
The measurement of the characteristic function qubit yields the correct value $\ket{1}_\chi$ with probability
\begin{equation}
\begin{split}
    P(r,N,K,m)&=\sin^2 \bigg( 2 \arcsin{\Big(\sqrt{\frac{m}{NK}}\Big)} \Big\lfloor r \frac{\pi}{4}\sqrt{NK}\Big\rfloor \bigg) \\ &\approx \sin^2\Big( \frac{\pi r\sqrt{m}}{2} \Big),
\end{split}
\end{equation}
where $2 \arcsin{\sqrt{\frac{m}{NK}}}$ is the angle of a single Grover rotation and $m=|\Omega_{\varepsilon}|$. The average probability of success in this case is given by  $P=\int_{0}^1 \text{d}r\sin^2\Big( \frac{\pi r\sqrt{m}}{2} \Big) \geq \frac{1}{2}-\frac{\sin \left(\sqrt{6} \pi \right)}{2 \sqrt{6} \pi } \approx 0.435$, so it is non-vanishing for all values of $m$.
Therefore, applying the above procedure a constant number of times $\mathcal{O}(1)$ allows to obtain a correct value $\alpha_l$ (i.e., $l\in\Omega_{\varepsilon}$) for any threshold probability $p<1$. Thus, our strategy involves performing
$\xi$ iterations of amplitude amplification $\hat{G}$, where $\xi$ is random number from the set $\{1,\ldots,\lfloor\frac{\pi}{4}\sqrt{NK} \rfloor\}$. 

Alternatively, instead of using the above-mentioned randomized strategy, one could utilize a more involved approach like the optimal fixed-point method~\cite{Yoder_2014}, what would similarly remove the overshooting problem.
As an advantage, this would give monotonic convergence and explicit failure-probability control.
For simplicity, we do not discuss this alternative here as the asymptotic scalings are the same.
A more detailed discussion of the probabilities of success of this scheme is provided in App.~\ref{app:accuracy}.

In the end, we are interested in values of parameter $\alpha$ for which an eigenvalue $\lambda$ of $\widetilde{\mathbf M}(\alpha)$ exists satisfying Eq.~\eqref{eq:inequality}. Measuring register $\ket{\alpha}_{k}$ returns a bitstring.
For the case of $J=1$ it might represent eigenvalues $\alpha_k$ of the generalized eigenvalue problem, associated for example with the discretized Schr\"odinger equation, as discussed in Sec.~\ref{subsec:collocation_method}. 

In summary, the overall number of QPE calls for finding an eigenvalue $\lambda$ that satisfies Eq.~\eqref{eq:inequality} with probability $p$ scales as $\mathcal{O}(\sqrt{KN})$.

    \end{proof}

In a special case of a single matrix $\mathbf M(\alpha')$ out of $K$ matrices $\{\mathbf M(\alpha_i)\}_{i=1}^K$, having an eigenvalue $\lambda$ satisfying conditions given in Eq.~\eqref{eq:inequality}, our algorithm returns product state
   \begin{equation}
       \ket{\alpha'}_k \ket{\lambda'}_{b}\ket{r'}\ket{1},
   \end{equation}
and upon measuring the $k=\log K$ qubit register $\ket{\alpha'}_k$ returns the relevant value $\alpha'$ with probability close to 1.
In case there is more than one matrix with eigenvalues in the corresponding range, the resulting state can be written as
    \begin{equation}
        \sum_{j\in A}\ket{\alpha_j} \ket{\lambda_j}_{b}\ket{r_j}\ket{1},
    \end{equation}
    where we defined $A = \lbrace j: \exists \lambda: \;  \widetilde{\mathbf M}(\alpha_j)\ket{\alpha_j}\ket{\phi_j}=\lambda_{j}\ket{\alpha_j}\ket{\phi_j} \wedge    |\lambda_j-\lambda_0|< \varepsilon \rbrace$ and the probability of measuring any $\ket{\alpha_j},j\in A,$ is uniform. In cases when more than one matrix eigenvalue satisfies Eq.~\eqref{eq:inequality}, the output state can be written as
    \begin{equation}
        \sum_{j\in A}\gamma_j\ket{\alpha_j} \ket{\lambda_j}_{b}\ket{r_j}\ket{1},
        \label{eq:quasidegeneracy}
    \end{equation}
    where $\gamma_j$ is proportional to the number of eigenvalues $\lambda$ (quasi-degeneracy) for a given $\alpha_j$. Such a scenario is plausible in solid-state physics, where the density of states for Hamiltonians with quasi-continuous (gapless) spectra is of interest. With a histogram estimate of the density of states function, one can predict several useful properties of materials, such as conductivity, heat capacity, magnetic susceptibility, and absorption spectra, or characterize superconducting systems within the BCS theory, modeling tunneling rates~\cite{cardona2005,kittel,nieminen2002}. In case $\epsilon$ is chosen such that $\gamma_k=\gamma_{k'}$ in Eq.~\eqref{eq:quasidegeneracy}, i.e., there is no quasidegeneracy, all correct values $\alpha_k$ can be obtained in $\mathcal{O}(L\log L)$ queries, where $L$ is the size of $A$ (the number of ``correct'' indices), which follows from the coupon collector's problem.
    
    A schematic depiction of eigenvalues of $\widetilde{\mathbf M}(\alpha)$ is given in Fig.~\ref{fig:landscape_sketch}. In this figure, for all $\alpha \in A$, matrices $\mathbf M(\alpha)$ satisfy Eq.~\eqref{eq:inequality}. Note that, when there are no matrices whose eigenvalues satisfy Eq.~\eqref{eq:inequality}, the probability for measuring any given $\alpha_k$ is uniform across the grid. In such a scenario, a finer grid for $\ket{\alpha_k}$ might be required or the value of $\lambda_0$ and $\varepsilon$ should be adjusted.

 Importantly, the probability of sampling a given index $\alpha$ is uniform over all  $\alpha_i$ and $\alpha_j$ for $i,j\in \Omega_{\varepsilon}$,
    \begin{equation}
        \lim_{K,1/\varepsilon \to \infty}  \frac{p_i}{p_j} = 1,
    \end{equation}
    in the limit of large number of points $K$ and high accuracy $1/\varepsilon$, while their product is kept constant, $K\varepsilon = \text{const}$.
    In particular, in this limit, the probability of sampling from a given region is inversely proportional to the derivative in the said region,
    \begin{equation}
        \lim_{K,1/\varepsilon \to \infty} \frac{p(\alpha\in A_i)}{p(\alpha\in A_j)} = \frac{\lambda'_{l}(x_j)}{\lambda'_{l}(x_i)}.
    \end{equation}
    where $\lambda'_{l}(x_i)$ denotes the derivative of the selected eigenvalue (singular value) $\lambda_{l}$ with respect to parameter $\alpha$, taken in the central point $x_i$ of the region $A_i$.

\textbf{Rectangular matrices.}
A generalization of Theorem~\ref{thm:quantum_algorithm_square} for families of rectangular matrices  $\mathbf M(\alpha)$ can be achieved by embedding them into larger square matrices (augmented matrices).
  Similarly to the case of Hermitian matrices, we can write the following corollary:  
    \begin{corollary}\label{corr:quantum_algorithm_rectangular}
        Consider a discretized one-parameter family of rectangular matrices $\mathbf M(\alpha)$ on a grid formed by $K$ points. 
        The larger of the matrix dimensions is denoted by $N$.
        If the dependence on $\alpha$ can be written as a finite series expansion $\sum_{j=0}^J \alpha^j \mathbf M_j$, for some matrices $\mathbf M_j$, there exists a quantum algorithm that finds a single instance of the parameters $\alpha'$ such that matrices $\mathbf M(\alpha')$ have singular values in region $[\sigma_0-\varepsilon,\sigma_0+\varepsilon]$ with $\widetilde{\mathcal{O}}(J\zeta\sqrt{NK}/\varepsilon)$ calls to block-encoded matrices $\mathbf M_j$, which all have their max-norms upper-bounded by $\zeta$.
    \end{corollary}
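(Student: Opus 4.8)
The plan is to reduce the rectangular case to the Hermitian case already handled in Theorem~\ref{thm:quantum_algorithm_square} by means of the standard Jordan--Wielandt (Hermitian dilation) construction. Given the one-parameter family of rectangular matrices $\mathbf M(\alpha)=\sum_{j=0}^J \alpha^j \mathbf M_j$ with $\mathbf M_j \in \mathbb{C}^{p\times q}$ and $N=\max(p,q)$, I would first pad each $\mathbf M_j$ with zero rows or columns so that all terms are of common shape $p\times q$ (this does not change the singular values and at most doubles the relevant dimension, hence is absorbed into $\widetilde{\mathcal{O}}$). Then define the augmented Hermitian family
\begin{equation}
\mathbf M'(\alpha) = \begin{pmatrix} \mathbf 0 & \mathbf M(\alpha) \\ \mathbf M(\alpha)^\dagger & \mathbf 0 \end{pmatrix} = \sum_{j=0}^J \alpha^j \begin{pmatrix} \mathbf 0 & \mathbf M_j \\ \mathbf M_j^\dagger & \mathbf 0 \end{pmatrix} \eqqcolon \sum_{j=0}^J \alpha^j \mathbf M'_j ,
\end{equation}
which is Hermitian of dimension $(p+q)$, still polynomial of degree $J$ in $\alpha$, with the same grid of $K$ points.

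The key spectral fact is that the eigenvalues of $\mathbf M'(\alpha)$ are exactly $\pm\sigma_i(\alpha)$, the signed singular values of $\mathbf M(\alpha)$, together with $|p-q|$ extra zeros. Consequently, $\mathbf M(\alpha)$ has a singular value in $[\sigma_0-\varepsilon,\sigma_0+\varepsilon]$ (with $\sigma_0 \geq 0$) if and only if $\mathbf M'(\alpha)$ has an eigenvalue in that interval. So I would invoke Theorem~\ref{thm:quantum_algorithm_square} applied to the family $\mathbf M'(\alpha)$ with target region $[\sigma_0-\varepsilon,\sigma_0+\varepsilon]$: its output is a set of parameter values $\alpha'$ for which $\mathbf M'(\alpha')$ has an eigenvalue in the region, which by the equivalence above is precisely a set of $\alpha'$ for which $\mathbf M(\alpha')$ has a singular value in the region. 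Measuring the $\ket{\alpha'}_k$ register then returns such an $\alpha'$, as claimed.

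It remains to check that the cost is unchanged up to the stated $\widetilde{\mathcal{O}}$ factors. The augmented dimension is $p+q \leq 2N$, so $\sqrt{N'K} = \mathcal{O}(\sqrt{NK})$. Each $\mathbf M'_j$ is block-encoded from the block-encoding of $\mathbf M_j$ at the cost of one extra ancilla qubit selecting the off-diagonal block (a controlled swap of the two registers), which leaves the sparsity $d$ and the max-norm unchanged, so the per-term scaling constant is still $\zeta$; hence the overall constant for $\mathbf M'(\alpha)$ is again bounded by $J\max_j\lbrace \zeta(D^j)\zeta(M'_j)\rbrace$, matching the theorem's hypothesis. Plugging into Theorem~\ref{thm:quantum_algorithm_square} gives $\widetilde{\mathcal{O}}(J\zeta\sqrt{NK}/\varepsilon)$ calls to the block-encoded $\mathbf M_j$, which is the claim.

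The only subtlety I anticipate is bookkeeping rather than a genuine obstacle: one must ensure that the Hermitian dilation is built term-by-term in $\alpha$ (so that the resulting family is still an honest degree-$J$ polynomial with the $\mathbf D^j(\alpha)\otimes\mathbf M'_j$ structure of Eq.~\eqref{eq:firstMtylda}, which it is, since dilation is linear), and that a target $\sigma_0 \geq 0$ with $\varepsilon < \sigma_0$ (or the one-sided interval $[0,\varepsilon]$ when $\sigma_0=0$) is used so that the eigenvalue region $[\sigma_0-\varepsilon,\sigma_0+\varepsilon]$ of the dilated matrix does not spuriously pick up the mirror branch $-\sigma_i$; if desired, the symmetric case can instead be phrased directly in terms of $|\lambda|$, which is what Theorem~\ref{thm:quantum_algorithm_square} already detects for Hermitian matrices. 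With that caveat dispatched, the corollary follows immediately from the theorem.
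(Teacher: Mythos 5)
Your proposal is correct and follows essentially the same route as the paper: both reduce the rectangular case to Theorem~\ref{thm:quantum_algorithm_square} via the Hermitian dilation (augmented block matrix), using the fact that its nonzero eigenvalues are the signed singular values $\pm\sigma_k$ and that the dilation is linear, so the degree-$J$ structure in $\alpha$ is preserved. The only cosmetic difference is that the paper adapts the initial-state preparation to a superposition supported on the $\ket{0}\ket{e_k}$ half of the dilated space (decomposing into the $\ket{g_{\pm k}}$ eigenvectors), whereas you invoke the theorem as a black box; this only changes constants, and your explicit handling of the mirror branch $-\sigma_k$ is a point the paper leaves implicit.
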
 

    \begin{proof}
    In the general rectangular case, the matrices $\mathbf{M}_j$ of size $N\times M$ can be tensor-multiplied by $\mathbf{D}^j(\alpha)$, analogously to the proof for Theorem~\ref{thm:quantum_algorithm_square}. 
    Here, we apply a similar reasoning as in the proof for Theorem~\ref{thm:quantum_algorithm_square}, assuming an augmented, square matrix of order $K(N+M)$
    \begin{widetext}
    \begin{equation}\label{secondMtylda}
    \begin{split}
        \widetilde{\mathbf M} &= \ket{1}\bra{0}\otimes \Big( \sum_j \mathbf D^j(\alpha) \otimes \mathbf M_j\Big)+\ket{0}\bra{1}\otimes \Big(\sum_j \mathbf D^j(\alpha) \otimes \mathbf M_j\Big)^\dagger =  \begin{pmatrix}
            0 & \sum_j \mathbf D^j(\alpha) \otimes \mathbf M_j^\dagger \\
            \sum_j \mathbf D^j(\alpha) \otimes \mathbf M_j & 0
        \end{pmatrix}.
    \end{split}
    \end{equation}
    \end{widetext}
    in place of matrix $\widetilde{M}$ given in~\eqref{eq:firstMtylda}, with certain differences.
    
     Nonnegative eigenvalues of $\widetilde{\mathbf M}$ are exactly the singular values of $ \sum_j \mathbf D^j(\alpha) \otimes \mathbf M_j$.
    Matrix $\widetilde{\mathbf M}$ has $2MK$ eigenvectors that can be written in the form $\ket{g_{\pm k}}=\ket{0}\ket{e_k}\pm\ket{1}\ket{f_k}$, with eigenvalues $\pm\sigma_k$, as well as $K(N-M)$-dimensional subspace orthogonal to them, contained in the kernel. 
     Vectors $\ket{e_k}$ and $\ket{f_k}$ are, respectively, the right and left singular vectors of matrix $ \sum_j \mathbf D^j(\alpha) \otimes \mathbf M_j$ corresponding to the same singular value $\sigma_k$.
     Subsequently, we prepare our initial state in a slightly different way, as compared to the square Hermitian case, namely:
\begin{equation}
\begin{split}
\ket{\Psi}&=\sum_{k=1}^{N}\ket{0}\ket{k}\ket{0}\ket{k} = \sum_{k=1}^{N}\ket{0}\ket{e_k}\ket{0}\ket{\bar{e}_k} \\ &=\sum_{k=1}^{N}\left(\ket{g_{+k}}+\ket{g_{-k}}\right)\ket{0}\ket{\bar{e_k}}.
\end{split}
\end{equation}

 Similarly to the proof of Thm.~\ref{thm:quantum_algorithm_square}, we apply QPE with the median trick and the region oracle.
 Upon application of the \textit{finder circuit}, defined in Eq.~\eqref{eq:grover}, the statevector can be written as 
\begin{equation}
\begin{split}
    &\ket{0}\ket{0}_b\ket{\Psi}\,\, \xmapsto{\mathrm{QPE},\,\, \mathrm{region\,\, oracle}} \,\, \ket{\Phi}, \,\,\text{ where} \\ \ket{\Phi}=&\sum_{k=1}^{N}\ket{\chi(\sigma_{k})}_\chi\left(\ket{\sigma_k}_b\ket{g_{k}}+\ket{-\sigma_i}_b\ket{g_{-k}}\right)\ket{0}\ket{\bar{e}_k},
\end{split}
\end{equation}
with $\chi(\sigma_k)=1$ iff $|\sigma_k - \sigma_0|<\varepsilon$ and we dropped for clarity all $c$ clock registers storing subsequent candidate eigenvalues.  
Combining the costs of state$\ket{\Psi}$ preparation, i.e.,   $\Od{\log{(KN)}\log(1/\varepsilon)/\varepsilon}$ gives the total cost
\begin{equation}
    \Od{J\log{(KN)}\sqrt{KN}\log(1/\varepsilon)/\varepsilon},
\end{equation}
for estimation of $\sigma_i$ with accuracy $\varepsilon$. Dropping logarithmic terms retrieves scaling expressed in the $\widetilde{\mathcal O}$ notation of Corrolary~\ref{corr:quantum_algorithm_rectangular}.

    \end{proof}
    
    The remainder of the paper is devoted to a study of a particular application of the present method, namely solving the Schr\"odinger equation, in which case $J=1$. The most general case then involves rectangular matrix representations. 

\section{Applications: finding spectrum of a Hamiltonian}\label{sec:application}

\subsection{Collocation method}\label{subsec:collocation_approximation}
We consider a specific case of Eq.~\eqref{eq:firstMtylda} for $J=1$, which produces the generalized eigenvalue problem as presented in Eq.~\eqref{eq:gen_eval}. In quantum-mechanical calculations, such a generalized eigenvalue problem arises naturally from the discretization of the Schr\"odinger equation, either via projection onto a basis set in $L_2$ (Galerkin method~\cite{Helgaker2000}) or onto Dirac delta distributions located at grid points (collocation method~\cite{Boys1969,Peet1989,Carrington2017}). Both collocation and Galerkin methods have proven effective in electronic structure calculations~\cite{Carrington2017,Manzhos2023,Helgaker2000}, as well as in rovibrational molecular simulations~\cite{Avila2015,Brown2015}.
When a non-orthogonal basis is used, or when the Gram matrix in the Galerkin-discretized Schr\"odinger equation is evaluated using inexact quadratures, the result is a generalized matrix eigenvalue problem. In the collocation method, the generalized eigenvalue structure arises by construction~\cite{Peet1989,Avila2015,Manzhos2023}. 

To compare with our proposed quantum algorithm discussed in Sec.~\ref{sec:general_method} for solving generalized eigenvalue problems, we now derive and briefly discuss key aspects of the collocation technique. In the following sections, we compare the computational complexity of classical approaches to solving collocation equations with our quantum computing proposal, highlighting advantages and disadvantages. 

In the collocation method, the Schr\"odinger equation for the Hamiltonian, written as
\begin{equation}\label{eq:collocation_SE}
\hat{H} = \hat{K} + \hat{V},
\end{equation}
is discretized by expanding the solution in a basis set $\lbrace\ket{\phi_n}\rbrace_{n = 0,\dots,N-1}$ of functions that are not necessarily orthonormal, and enforcing the solution solves the Schr\"odinger equation exactly, for a chosen grid of spatial points represented by vectors $\lbrace\ket{q_k}\rbrace_{k=0,1,\dots,M-1}$. 
As a result of such a representation choice, the  following matrix eigenvalue equation can be formed~\cite{Peet1989,Carrington2017}:
\begin{equation}
\mathbf{H}\mathbf{U} = \mathbf{B}\mathbf{U}\mathbf{E}.
\label{eq:collocation}
\end{equation}
where $\mathbf{U}$ is the matrix formed by the Hamiltonian’s eigenvectors, $\mathbf{E}$ is the diagonal matrix of eigenvalues, $\mathbf{E} = \text{diag}(E_0,\dots,E_D)$ and $\mathbf B_{kn} = \phi_n(q_k)$ is the \textit{collocation matrix} keeping values of basis wavefunctions evaluated at grid points.

Matrix elements of the collocation Hamiltonian can be written as:
\begin{equation}
\mathbf{H}_{kj} = \langle q_k |\hat{H}| \phi_j \rangle = \left(\mathbf{K}\mathbf{B}\right)_{kj} + (\mathbf{V}^{\text{diag}}\mathbf{B})_{kj}
\label{eq:collocation2}
\end{equation}
where $\mathbf {K}$ is the second-derivative kinetic energy operator,  and $\mathbf{V}^{\text{diag}}$ is the diagonal potential energy operator, with values of the potential at the grid points: $\mathbf{V}^{\text{diag}}_{kn} = \delta_{kn} V(q_k)$. If the number of basis functions equals the number of grid points ($N = M$), Eq.~\eqref{eq:collocation} becomes a generalized matrix eigenvalue problem for a square, non-symmetric matrix.
Eq.~\eqref{eq:collocation2} can be further simplified by noting that the collocation approximation is formally equivalent to approximating the representation of the second-derivative operators appearing in the kinetic energy operator~\cite{Boys1969} as follows:
\begin{equation}
\begin{split}
\mathbf K_{nm} &= -\bra{\phi_n}\frac{\partial^2}{\partial q^2}\ket{\phi_m} = -\int \text{d}q \phi_n^*(q) \phi''_m(q) \\ &\approx -\sum_{k \in \text{grid}} \phi_{n}^*(q_k) \phi''_m(q_k) = [\mathbf B^{\dagger}\mathbf B'']_{nm}.
\end{split}
\end{equation}
as originally proposed by Boys~\cite{Boys1969}.

The collocation Schr\"odinger equation~\eqref{eq:collocation} can be formulated with general kinetic energy operators expressed in curvilinear coordinates 
\begin{equation}
    \hat{K}=\sum_{l,l'=1}^{D}G_{ll'}(\mathbf{q})\frac{\partial^2}{\partial q_l\partial q_{l'}}+\sum_{l=1}^{D}F_{l}(\mathbf{q})\frac{\partial}{\partial q_l},
\end{equation}
where $G_{ll'}(\mathbf{q})$ and $F_{l}(\mathbf{q})$ represent coordinate-dependent matrices and $D$ is the number of coordinates (for molecules $D=3N_{at.}-6$, where $N_{at.}$ is the number of atoms). The potential energy surfaces can adopt a general form giving the following representation of the collocation equations~\cite{Avila2015,Carrington2021}:
\begin{equation}\label{eq:initial_schroedinger}
(\mathbf{B}'' + \mathbf{V}^{\text{diag}}\mathbf{B})\mathbf{U} = \mathbf{B}\mathbf{U}\mathbf{E},
\end{equation}
where
\begin{equation}
\mathbf B''_{kn} = (\hat{K}\phi_n)(q)|_{q=q_k}
\label{eq:second-derivative}
\end{equation}
represents the action of the kinetic energy operator on the $n$-th basis function evaluated at the grid point $q_k$. For the one dimensional harmonic oscillator model, the kinetic energy operator in the normal coordinates reads $\hat K = -\frac{\partial^2}{\partial q^2}$, such that the \textit{second derivative} matrix defined in Eq.~\eqref{eq:second-derivative} can be written as $\mathbf B''_{kn}=\phi''_n(q_k)$.  We further consider the form of collocation equations given in Eq.~\eqref{eq:initial_schroedinger}, noting that its form is general, not limited to the harmonic approximation nor a specific potential or kinetic energy operator. 

\subsection{Solving collocation equations: matrix inverse}\label{subsec:collocation_method}
Collocation equations given in Eq.~\eqref{eq:initial_schroedinger} can be solved by reformulating the problem as a regular square eigenvalue problem. For this purpose, $\mathbf B^{\dagger}$ can be applied to both sides of Eq.~\eqref{eq:initial_schroedinger} to obtain
\begin{equation}\label{eq:classical_collocation}
(\mathbf B^{\dagger}\mathbf B'' + \mathbf B^{\dagger} \mathbf V \mathbf B)\mathbf U = \mathbf B^{\dagger} \mathbf B \mathbf U \mathbf E,
\end{equation}
where $\mathbf B^{\dagger} \mathbf B$ is, in general, a non-diagonal symmetric (invertible) matrix. By acting on both sides of Eq.~\eqref{eq:classical_collocation} with the inverse of $\mathbf B^{\dagger} \mathbf B$, one obtains a regular non-symmetric eigenvalue problem written as

\begin{equation}\label{eq:collocation_SE_simplified}
\underbrace{(\mathbf B^{\dagger} \mathbf B)^{-1}(\mathbf B^{\dagger}\mathbf B'' + \mathbf B^{\dagger}\mathbf V\mathbf B)}_{\widetilde{\mathbf H}} \mathbf U = \widetilde{\mathbf H}\mathbf U = \mathbf U\mathbf E,
\end{equation}
which can be solved by the Arnoldi procedure~\cite{Lehoucq1998} or more recent alternatives~\cite{Myllykoski_2020}. Matrix inversion introduces numerical stability issues when the condition number $\kappa$ of the $\mathbf B^{\dagger} \mathbf B$ matrix is large. For this reason, and due to the $\mathcal{O}(N^3)$ computational cost of matrix inversion, this solution technique is not always feasible.

Similarly, a quantum computer implementation, potentially involving the HHL algorithm~\cite{Harrow2009} for matrix inversion followed by QPE, faces several challenges. Despite the generally favorable scaling of the HHL matrix inversion algorithm compared to exact classical algorithms (i.e.,  $\mathcal{O}(\log(N) d^2 \kappa^2 / \epsilon)$ vs.\ $\mathcal{O}(N^3)$), where $d$ is matrix sparsity, the HHL algorithm is expected to perform poorly for problems with high condition numbers. The dependence on condition number in classical iterative algorithms, such as the Arnoldi iteration, is more favourable, being $\mathcal{O}(\kappa^{\frac{1}{2}})$. It may therefore be more advantageous to design the basis set such that matrix inversion can be performed classically, which is straightforward for direct product basis sets and pruned non-direct basis sets~\cite{Zak2019}.

Secondly, when a regular eigenvalue problem is formed, one could attempt to use Quantum Phase Estimation to find eigenvalues associated with Eq.~\eqref{eq:collocation_SE_simplified}, but this entails the challenge of sampling enough eigenvalues of interest. Quantum algorithms for Hamiltonian simulation based on QPE~\cite{babbush2018,lee2021,su2021,trenev2025} are inherently designed to solve the ground state problem or compute a few lowest eigenvalues. When many eigenvalues are required, such as in solving the ro-vibrational molecular problem, an alternative approach is needed.

In the next section, we describe a procedure for solving collocation equations without matrix inversion, designed for parallel computation of many eigenvalues on either classical or quantum computers.  

\subsection{Solving collocation equations: landscape scanning with classical computers}\label{sec:landscape_scanning}
The collocation equation
\begin{equation}\label{eq:initial_schroedinger2}
(\mathbf{B}'' + \mathbf{V^{diag}}\mathbf{B})\mathbf{U} = \mathbf{B}\mathbf{UE}
\end{equation}
can be solved by first constructing the matrices $\mathbf{B}, \mathbf{B}''$ and $\mathbf{V}^{diag}$, and computing the lowest singular value of the residue operator defined as:
\begin{equation}
\mathbf M(\alpha) \coloneqq \mathbf{B}'' + \mathbf{V^{diag}}\mathbf{B} - \alpha \mathbf{B}
\label{eq:residue}
\end{equation}
where $\alpha$ is a real parameter (cf. Eq.~\eqref{eq:firstMtylda}). When $\alpha$ equals one of the singular values $E_i$ satisfying Eq.~\eqref{eq:initial_schroedinger2}, the following equation is satisfied:
\begin{equation}\label{eq:landscape_SE_simplified}
(\mathbf{B}'' + \mathbf{V^{diag}}\mathbf{B} - E_i \mathbf{B})u_i = 0.
\end{equation}
where $u_i$ is a singular vector corresponding to the singular value $E_i$.

In practice, it is sufficient to find such an $\alpha$ that the singular values of $\mathbf M(\alpha)$ are below some threshold value $\varepsilon$. The quality of the basis set used to represent the Schr\"odinger equation written in Eq.~\eqref{eq:initial_schroedinger2} dictates the minimal achievable singular value for the residue matrix given in Eq.~\eqref{eq:residue}. In particular, if the basis set is complete, i.e., the space spanned by the basis functions contains the eigenspace for the Hamiltonian, the residue function has nodes corresponding to eigenvalues of the Hamiltonian. 
We may thus calculate the residue matrix defined in Eq.~\eqref{eq:residue} for a selected grid (energy landscape scan) of candidate eigenvalues $\lbrace \alpha_j\rbrace_{j=1,...,K}$ to determine minima in the singular values, as shown in Figure~\ref{fig:26functions}.

\subsection{Comparing the matrix inversion method and landscape scanning}\label{subsec:limitations}
For cases when the condition number $\kappa$ of  $\mathbf B^{\dagger} \mathbf B$ is low and high precision $\delta$ for eigenvalues $E_i$ is required, it is reasonable to use the matrix inverse method as discussed in Sec.~\ref{subsec:collocation_method}. However, as we discuss below, for distributed Gaussian~\cite{Manzhos2023} basis sets with non-zero overlaps used for representing the collocation equations, the condition number will grow fast as the basis set size increases, reaching intractable values of $10^{16}$-$10^{20}$ already for 30 to 40 basis functions in 1 dimension. In this section, we compare the matrix inverse method with eigenvalue landscape scanning and identify cases for which the latter is advantageous. 
    
First, we study the exactly solvable one-dimensional harmonic oscillator model, with $V(x) = x^2$, in the region $x \in [-10,10]$. Both methods require the same data as input: matrices $\mathbf{B}''$, $\mathbf{V^{diag}}$ and $\mathbf{B}$. For the basis set, we choose $N$ real Gaussian functions with a constant width, centered at $M = 80$ equidistant grid points $x_i$.
Our tests involve increasing the number of basis functions while keeping the number of grid points constant. The results for $N=26$ basis functions are shown in Fig.~\ref{fig:26functions}.
Estimation of low energies (red dashed lines) is accurate up to the ninth excited state (exact value $E_9  = 19$), for which the matrix inverse method gives $E_i=19.062$, marked with solid green line in Fig.~\ref{fig:26functions}. The error in the tenth excited state is already 30\% of the energy spacing. The accuracy in energy levels stops increasing for the matrix inverse method with more than 26 basis functions. 
The choice and number of grid points also play an important role in collocation, although a detailed analysis is beyond the scope of this work, we refer the reader to Refs.~\cite{Manzhos2023,Carrington2021,Avila2015} for comprehensive discussions. In practice, provided that a sufficiently high-quality basis set is used, the precise distribution of grid points is less critical. It is generally observed that the number of grid points must exceed the number of basis functions to achieve convergence, with this requirement closely tied to basis set quality. Increasing the number of grid points typically leads to a higher condition number of the collocation matrix, which can reduce numerical stability and necessitate higher-precision arithmetic. Having said that, the increase in the number of basis functions in one dimension also naturally leads to greater condition numbers.

Notably, when the number of basis functions is increased, the condition number of the $\mathbf B^{\dagger} \mathbf B$  matrix grows and reaches $5\times 10^{16}$ already for 35 basis functions, which poses problems for the matrix inverse method, as depicted in Fig.~\ref{fig:35-36functions}. The red dashed lines corresponding to eigenvalue estimates from the double-precision matrix inverse method estimate reasonably only the first few excited states, while the other eigenvalues are complex (not shown).    
In comparison, the eigenvalue landscape scanning method, which witnessed worse accuracy than the matrix inverse method for the lower number of basis functions (cf.~Fig.~\ref{fig:26functions}), provides advantages for 35 basis functions. The first advantage is that it is insensitive to the condition number and captures eigenvalues for highly excited states, beyond $E_{20}$. Thus, the preferred method for calculating higher excited state energies is landscape scanning. The sensitivity of eigenvalue identification through landscape scanning can be further improved by removing background trend, as explained in App.~\ref{app:background_removal}.
In the following figures, we apply this transformation for an improved clarity.

In summary, when the condition number is not prohibitive, the matrix inversion method yields more accurate energy estimates than landscape scanning. However, because the condition number of the associated matrices grow rapidly with basis set and grid size (not only for distributed Gaussian basis sets), we conclude that for larger systems and when highly excited states are of concern, the landscape scanning method is preferable. 

\onecolumngrid
\begin{center}
 \begin{figure}[h!]
       \centering
       \includegraphics[width=0.9\linewidth]{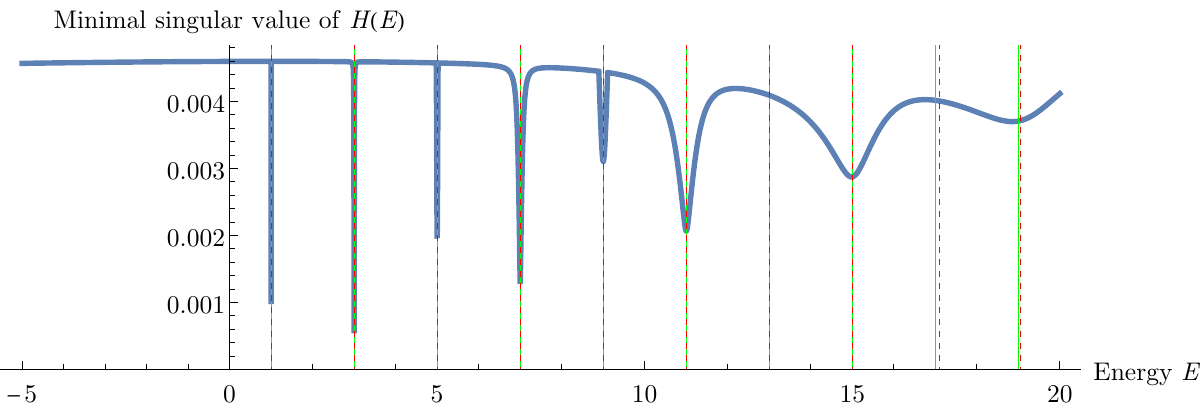}
       \caption{
 Comparison of the matrix inverse method described in Sec.~\ref{subsec:collocation_method} with the classical landscape scanning method for solving the collocation Schr\"odinger equation for the 1D harmonic oscillator potential, using 26 basis functions and 80 grid points. The  condition number of $\mathbf B^\dagger \mathbf B$ equals to $\kappa = 1.53\times 10^{9}$. The exact energies $E_n = 2n+1$ are denoted by solid, vertical green lines. The blue curve corresponds to the minimal singular value produced by the landscape method, while the dashed vertical lines correspond to the eigenvalue estimates obtained by the inversion method.}
       \label{fig:26functions}
   \end{figure}
\end{center}
\twocolumngrid
   
  \onecolumngrid
\begin{center}
   \begin{figure}[h!]
       \centering
       \includegraphics[width=0.9\linewidth]{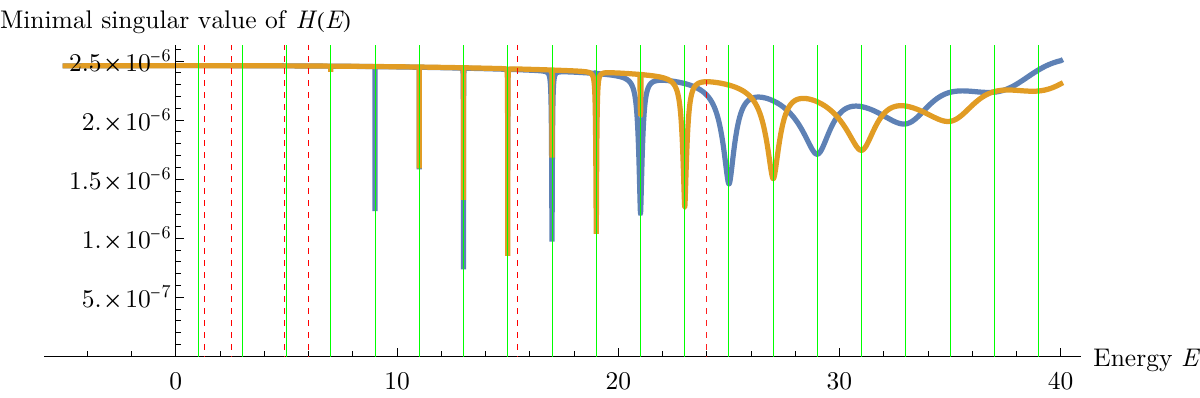}
       \caption{The same setup as in Fig.~\ref{fig:26functions} with 35 (blue) and 36 (orange) basis functions, with 80 grid points. The condition numbers respectively $1.21\times 10^{16}$ and $5.47\times 10^{16}$. Two different sets of basis functions are combined to account for the parity of the solutions to the Schr\"odinger equation.}
       \label{fig:35-36functions}
   \end{figure}
   \end{center}
\twocolumngrid

    Similarly to the harmonic oscillator model, we studied the collocation method with the Morse potential given by the following formula
    \begin{equation}\label{eq:Morse_potential}
        V_M(x) = D_e (1- e^{-ax})^2,
    \end{equation}
    where we have chosen the parameters to be $D_e = 16$ and $a = 4$.
    The basis functions are defined in the same way as for the harmonic oscillator case.
    For low numbers of basis functions, the matrix inverse method is again superior to landscape scanning.   Fig.~\ref{fig:morse_38functions} compares the eigenvalue scanning method with matrix inversion and demonstrates that, to correctly estimate higher excited state energies, an increased number of basis functions is required, giving a high condition number, over $1 \times 10^{16}$. In this case, landscape scanning captures eigenvalues, whereas the matrix inverse method becomes numerically unstable. The number of grid points in this case has less influence on the accuracy of the energy levels.
    
\onecolumngrid
\begin{center}
       \begin{figure}[h!]
       \centering
       \includegraphics[width=0.9\linewidth]{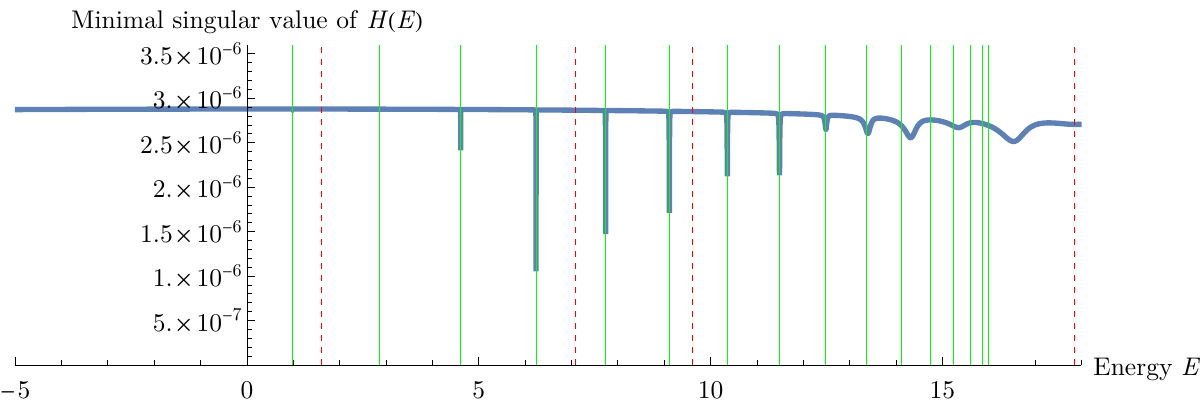}
       \caption{Comparison of the matrix inverse method described in Sec.~\ref{subsec:collocation_method} with the classical landscape scanning method for solving the collocation Schr\"odinger equation for the 1D Morse potential, given by Eq.~\eqref{eq:Morse_potential}, using 35 basis functions  and 80 grid points, with the condition number of $\mathbf B^\dagger \mathbf B$ equal to $1.23 \times 10^{16}$.
The curves have the same meaning as in Fig.~\ref{fig:26functions}.
Note that the matrix inversion method for solving the collocation equations (red dashed line) fails to correctly predict even a single energy level (green solid lines).}
       \label{fig:morse_38functions}
   \end{figure}
   \end{center}
\twocolumngrid

\subsection{Solving collocation equations: landscape scanning with quantum computers}
\label{sec:quantum-landscape}
In this section, we present details of our quantum computing algorithm for energy landscape scanning. For demonstration, we aim at solving the collocation equations given in Eq.~\eqref{eq:initial_schroedinger}, by first constructing the residue operator in an extended Hilbert space as follows:
    \begin{equation}
       \widetilde{\mathbf M}(\alpha) =  \mathbb{I} \otimes (\mathbf{B}'' + \mathbf{V^{diag}}\mathbf{B} ) - \text{diag}(\alpha_1,...,\alpha_K) \otimes \mathbf{B} . 
       \label{eq:collocation_qscan}
    \end{equation}
    where the grid of candidate eigenvalues $\alpha_i$ is encoded in basis states of an additional Hilbert space $\mathcal{H}_G$. Comparing with Eq.~\eqref{eq:firstMtylda} that defines a general method, we set $\mathbf M_0=\mathbf{B}'' + \mathbf{V^{diag}}\mathbf{B}$ and $\mathbf M_1=\mathbf B$. 
    Matrix $\widetilde{\mathbf M}(\alpha)$ is the residue operator that acts on states from the extended Hilbert space $\mathcal{H} = \mathcal{H}_G\otimes \mathcal{H}_{sys} \equiv \mathbb{C}{^{2^K\times 2^K}}\otimes  \mathbb{C}^{2^M\times2^N}$.
    Next,  apply Corollary~\ref{corr:quantum_algorithm_rectangular} to matrix $\widetilde{\mathbf M}(\alpha)$ defined in Eq.~\eqref{eq:collocation_qscan}. A schematic overview of the proposed algorithm is depicted in Fig.~\ref{fig:sketch_quantum_algorithm}.

    Given block-encodings of matrices $\mathbf M_0$ and $\mathbf M_1$, we can compute the energies $E_i$ in block-encoding query complexity $\mathcal{\widetilde O}(\zeta\sqrt{NK}/\varepsilon)$, where $N$ is the number of basis functions, $K$ is the number of energy grid points, and $\zeta$ is the total block-encoding scaling constant, as discussed in Sec.~\ref{sec:general_method}. 

    \onecolumngrid
    \begin{center}
    \begin{figure}[h]
        \centering
        \includegraphics[width=1\linewidth]{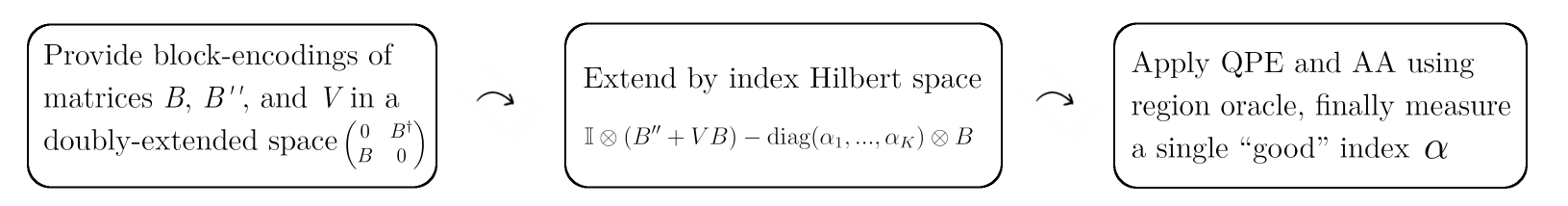}
        \caption{Sketch of the quantum algorithm for searching energies for a given collocation Hamiltonian defined by the collocation matrix $\mathbf \mathbf B$, its second derivative $\mathbf B''$ and the potential $\mathbf V$. Quantum Phase Estimation (QPE) and amplitude amplification (AA) algorithms are used with an oracle determining if eigenvalues belong to a predefined region, as written in Eq.~\eqref{eq:inequality}.
        }
        \label{fig:sketch_quantum_algorithm}
    \end{figure}
    \end{center}
    \twocolumngrid

    \subsection{Comparison of algorithms' complexities}\label{subsec:algorithms_complexities}
    
    We compare the complexities of two classical algorithms for solving the collocation equations and two quantum algorithms. For the classical algorithms, we discuss the matrix inverse method presented in Sec.~\ref{subsec:collocation_method} and landscape scanning methods presented in Sec.~\ref{sec:landscape_scanning}. As a comparison, we provide the solution to a problem of estimating eigenvalues in Eq.~\eqref{eq:collocation_SE_simplified}, with our quantum landscape scanning method based upon QPE.
    
    For comparing quantum and classical algorithms we choose the following parameters: accuracy $\varepsilon$, the number of basis functions $N$~\footnote{To avoid obfuscation by too many parameters, we also assume the number of grid points in the spatial coordinate $M$ is proportional to $N$, i.e.,  $M = \mathcal{O}(N)$, making the ratio of the dimensions of the initial matrices $\mathbf{B}$ and $\mathbf{B}''$ constant. }, the maximal element (max-norm) of matrices $M_\text{max} \coloneqq \text{max}\{||\mathbf{B}||_\textrm{max},||\mathbf{B''}||_\textrm{max},||\mathbf{V}^{diag}||_\textrm{max}\}$, and the number of energy grid points $K$.
    Our comparison also assumes that each algorithm under consideration aims to find at least one eigenvalue of a given Hamiltonian.
    Finally, we assume that the matrices are $d$-sparse, i.e.,  each row and column contains at most $d$ non-zero elements~\footnote{This assumption is valid for localized basis sets, such as the Gaussian basis.}. 
    Under these assumptions, the complexities of all  three cases are shown in Table~\ref{tab:complexity_comparison}.

\onecolumngrid
    \begin{center}
\begin{table}[H]
    \centering
\begin{tabular}{|cc|c|}
\hline
\multicolumn{2}{|c|}{classical}    & \multicolumn{1}{|c|}{quantum} \\ \cline{1-3}
  matrix inversion &    \multicolumn{2}{|c|}{landscape scanning\hspace{0.5cm}\phantom{1}}              \\ \cline{1-3}
\multicolumn{1}{|c|}{$N^{2.371}\log(\kappa)+N^2\kappa^{\frac{1}{2}}\log(1/\varepsilon)$} & $d^2N+KdN\log(1/\varepsilon)$ & $M_\text{max} \frac{\sqrt{K}N^{3/2}d}{\varepsilon}$     \!\!\!\!\!\!\!\!\! \phantom{$\frac{\int}{\int}$}        \\ \hline
\multicolumn{1}{|c|}{condition number problem} & --- & ---     \\ \hline
\end{tabular}
\caption{Computational and (and T-gate) complexities $\mathcal{O}$ for three algorithms solving the Schr\"odinger equation, considered in this work: the matrix inverse method and two landscape scanning methods, from the perspective of both classical and quantum computing.
The relevant parameters are mentioned in the text above. 
From the computational perspective, the key parameter for the matrix inverse method is condition number $\kappa$, causing numerical instabilities at matrix inversion. In particular, the condition number problem limits the number of Gaussian basis functions that can be used per unit volume in the physical space, thus restricting calculations to a few excited states.
The issue with condition number is absent for the landscape method, where scans of singular values of the residue operator $\mathbf M(\alpha)$ are carried out without matrix inversion, see Sec.~\ref{sec:landscape_scanning}.}
\label{tab:complexity_comparison}
\end{table}
\end{center}
\twocolumngrid

We discuss below the algorithmic complexities given in Table~\ref{tab:complexity_comparison}.
Comparing quantum computing to classical computing algorithmic complexity is presently sanctioned by numerous assumptions related to quantum gate time execution, quantum error correcting code overhead and its logical error rate as well as decoding and postprocessing times. For simplicity, we consider a fully fault-tolerant quantum machine for quantum computing, and for classical computing, we count the number of arithmetic operations (floating point operations, FLOPS). We assume that the number of elementary logic gates associated with arithmetic is linear in the number of bits. We thus do not break the computational complexity into I/O and arithmetic operations, which is a simplification. Nonetheless, the resulting classical and quantum gate complexities are compared to provide a general sense of scaling expressed as the number of elementary operations for each architecture type. 

\vspace{0.2cm}

\textbf{Matrix inversion method.} -- Collocation equations given in Eq.~\eqref{eq:collocation_SE_simplified} can be solved by first finding the inverse of the $\mathbf B^{\dagger}\mathbf B$ matrix, followed by the Arnoldi iterative algorithm for finding eigenvalues of non-symmetric matrices. The classical computational complexity (quantified by the number of FLOPS) for finding matrix inverse is $N^\omega$, where $\omega$ is determined by the cost of matrix multiplication, with currently best value of $\omega \approx 2.371$~\cite{williams,Alman2024}. Upon recasting collocation equations into the regular eigenvalue problem written in Eq.~\eqref{eq:collocation_SE_simplified}, the Hamiltonian spectrum can be found using Krylov subspace methods (Arnoldi algorithm)
requiring $\mathcal{O}(N^2 m + Nm)$ floating point operations~\cite{Saad1986}, where $m$ is the size of the Krylov space. Typically, the precision of eigenvalues in the Arnoldi procedure increases exponentially with the size of the Krylov space, i.e., $m=\mathcal{O}(\log(1/\varepsilon))$. Thus, the total complexity for solving the collocation equations via the matrix inverse and the Arnoldi method is $\mathcal{O}\big(N^{2.371} + N^2 \log(1/\varepsilon)+N\log(1/\varepsilon)\big)$, for infinite arithmetic precision. 

However, the condition number of the $\mathbf B^{\dagger}\mathbf B$ matrix necessitates an overhead in the precision of matrix elements leading to a multiplicative increase in computational complexity of matrix inversion that grows linearly with $\log(\kappa)$, where $\kappa$ is the condition number. For the Arnoldi algorithm part, the number of iterations required to converge to eigenvalues scales as $\mathcal{O}(\kappa^{\frac{1}{2}})$~\cite{musco}. We summarize these results in Table~\ref{tab:complexity_comparison}, showing only the leading terms.
Notably, the convergence of the Krylov method does not depend directly on the size of the energy window $S$, but it does depend on the eigenvalue separation~\footnote{Note that we neglect the dependence of the Krylov space size on the number of eigenvalues requested, which is often logarithmic~\cite{Saad1986}}.

In practice, solving the non-symmetric generalized eigenvalue problem is often carried out with the generalized Schur's decomposition, with the associated cost of $\mathcal{O}(N^3)$~\cite{Golub1996}. In the Schur decomposition too, the condition number of the $\mathbf{B}$ matrix determines the minimal precision of matrix elements, thus affecting the complexity of the algorithm.

In the case of quantum-mechanical problems, the basis functions are often conveniently chosen to have non-negligible overlap (e.g., Gaussian functions). 
As the basis set size $N$ is increased, the condition number $\kappa$ of the matrix $\mathbf B^\dagger \mathbf B$ can grow uncontrollably, rendering the procedures described above infeasible. Our numerical experiments for the anharmonic one-dimensional vibrational motion Hamiltonian indicate that, with 35 Gaussian basis functions, the condition number of  $\mathbf B^\dagger \mathbf B$ is larger than $10^{16}$ (see Sec.~\ref{subsec:limitations}), making calculations with this method impractical.

Generalized eigenvalue problems in electronic-structure calculations arise from the use of non-orthogonal basis sets and, in principle, require basis orthogonalization. Such procedures, including Löwdin orthogonalization, are generally non-unitary and would therefore require additional quantum resources, making them undesirable to perform on a quantum computer unless strictly necessary. Orbital orthogonalization is typically applied to a modest number of localized functions and is therefore best treated as a classical preprocessing step. If a basis transformation must be performed within a quantum circuit, one needs to implement the non-unitary map $\ket{\xi}=S^{-1/2}\ket{\phi}$, where $S$ is the overlap matrix, for example via block-encoding of $S^{-1/2}$. Since the eigendecomposition $S=U^\dag s U$ involves only a diagonal non-unitary matrix $s$, the transformation can be implemented more efficiently than a generic basis change, for example via diagonal unitary synthesis, at a cost of $\mathcal{O}(\sqrt{N\log(1/\varepsilon)})$ for $s$.

Importantly, estimating eigenvalues of the Hamiltonian alone via QPE does not solve the Schr\"odinger equation when the Hamiltonian is represented in a non-orthogonal basis. The overlap matrix must be accounted for explicitly, either by orthogonalization or by solving the generalized eigenvalue problem directly, which is the approach adopted in this work. Non-orthogonal basis sets are nevertheless advantageous because they substantially reduce the cost of matrix-element evaluation. Gaussian basis sets in electronic-structure theory exemplify this trade-off, giving efficient analytic integral evaluation at the expense of non-orthogonality, loss of the electron-electron cusp, and the need to solve a generalized eigenvalue problem. Similar generalized eigenvalue problems naturally arise in nuclear-motion calculations and in quadrature-based discretizations of partial differential equations using distributed Gaussian functions, as considered here.

\vspace{0.2cm}

\textbf{Classical landscape scanning method.} -- In the classical landscape scanning method, we form a matrix $\mathbf M^\dagger (\alpha)\mathbf M(\alpha)$ from $\mathbf M(\alpha) = \mathbf{B}'' +\mathbf{V}^{diag}\mathbf{B} - \alpha \mathbf{B}$, given by Eq.~\eqref{eq:landscape_SE_simplified} with the cost given by $\mathcal{O}(d^2 N)$ for $d$-sparse matrices. 
For each $\alpha \in \lbrace \alpha_j\rbrace_{j=0,1,...K}$, we find the smallest eigenvalue of $\mathbf M^\dagger (\alpha)\mathbf M(\alpha)$, with the associated cost of $\mathcal{O}(dN\log(\frac{1}{\varepsilon}))$ for $d$-sparse matrices. Thus, the complete energy grid scan costs $\mathcal{O}\big(d^2N+KdN\log(\frac{1}{\varepsilon})\big)$ FLOPS respectively for sparse matrices, which transforms into $\mathcal{O}(N^3+KN^2\log(\frac{1}{\varepsilon}))$ for dense matrices. Note that $\mathbf M^\dagger (\alpha)\mathbf M(\alpha)$ is not necessarily sparse, even though $\mathbf{M}(\alpha)$ can be sparse. 
To overcome this issue, serial landscape scanning calculation can be performed with in-place memory storage for the collection of matrices $\mathbf M^\dagger (\alpha)\mathbf M(\alpha)$. 
One can simplify the calculation of the expression
\begin{equation}
\begin{split}
    &\mathbf M^\dagger (\alpha)\mathbf M(\alpha) =  \alpha^2 \,\,\mathbf{B}^\dagger\mathbf{B} - \alpha\,\big( (\mathbf{B}''+\mathbf{V}^{diag}\mathbf{B})^\dagger\mathbf{B} \\ &+ \mathbf{B}^\dagger(\mathbf{B}''+\mathbf{V}^{diag}\mathbf{B}) \big)+ (\mathbf{B}''+\mathbf{V}^{diag}\mathbf{B})^\dagger(\mathbf{B}''+\mathbf{V}^{diag}\mathbf{B})
\end{split}
\end{equation}
by precomputing three matrices that are multiplied by $\alpha^2$, $\alpha$, and the constant term. 
Therefore, at the cost of increased memory, we can speed up the calculations.
Parallelization of the energy grid scan entails a proportionally larger memory consumption. 

The quantum landscape scanning algorithm is discussed in the following subsection.

\subsection{Comparing quantum and classical algorithms}
    The complexity of quantum algorithms discussed in this work depends on four numerical factors: the size of matrices $N$, the truncation order $J$ (see Eq.~\eqref{eq:M-family}), the admissible error $\varepsilon$, and the number of grid points $K$. Condition number $\kappa$ of $\mathbf{B}$ is an artifact of the basis set choice, the number of basis functions, and grid points. 
    Classical landscape scanning algorithm finds the relevant eigenvalue for each residue matrix $\mathbf M(\alpha)$ separately~\footnote{For higher-order truncations, the complexities of the classical and quantum algorithms both grow linearly in $J$. Thus, quantum advantage can be searched already for low $J$'s.}, leading to $\mathcal{O}(JK Nd \,\text{polylog}(1/\varepsilon))$ complexity for classical landscape scanning. 

    In order to compare the quantum and classical algorithms, it is necessary to understand the relationship between parameters.
    For example, the order of the Schr\"odinger equation is linear in $\alpha$, hence $J=1$, and we shall focus on this case for the remaining part of the paper.
    Also, the number of grid points $K$ is related to the desired accuracy in $\alpha$, as the lower error we tolerate, the finer the grid is required~\footnote{In the case of vanishing derivative of the singular value dependence on the parameter $\alpha$, $K$ might be proportional to roots of $1/\varepsilon$, making this dependence better from the quantum computing perspective.}, $K \propto 1/\varepsilon$, see more detailed discussion in Sec.~\ref{subsec:algorithms_complexities}.

\textbf{Quantum landscape scanning method.} -- In the framework of digital quantum simulators, all parts of the circuit must correspond to unitary matrices.
Thus, in order to use the quantum energy landscape method summarized in Corollary~\ref{corr:quantum_algorithm_rectangular}, it is necessary to block-encode the matrix $\mathbf M = \mathbf{B}''+\mathbf{V}^{diag}\mathbf{B}$, which can be done either directly or by a linear combination of block-encodings of the component matrices: $\mathbf{B}$, $\mathbf{B}''$, and $\mathbf{V}^{diag}$. The optimal choice of block-encoding scheme depends on the particular structure of the component matrices and their block-encoding scaling constants~\cite{camps,Snderhauf2024,mukhopadhyay2024}.
In collocation, the basis sets are typically chosen such that the resulting matrix $\mathbf M(\alpha)$ is sparse. For this reason, we adopt the $d$-sparse block-encoding procedure with the associated T-gate cost $\tilde{\mathcal{O}}(N d)$~\cite{Low_2017,camps,Snderhauf2024}, where $d$ is matrix sparsity.

Following the commonly adopted consensus, we assume that the T-gates give the dominant contributions to the total quantum computing cost.
Corollary~\ref{corr:quantum_algorithm_rectangular} states that one must call the block-encoding circuit $\mathcal{O}(\sqrt{NK}/\varepsilon)$ times, giving the total T-gate complexity  $\mathcal{O}(M_\text{max}\sqrt{K}N^{3/2}d/\varepsilon)$, where $M_\text{max}$ is the maximum norm of the $\mathbf M(\alpha)$ matrix, as written in Table~\ref{tab:complexity_comparison}.

The condition number of $\mathbf B^{\dagger}\mathbf B$ is determined solely by the choice of basis set and the collocation grid. Once this discretization is fixed, its value is fixed as well and is independent of the resolution of the energy grid used in the landscape scan. In contrast, $\mathbf M(\alpha)$ denotes a one-parameter family of matrices whose condition number depends on $\alpha$. 

In particular, the condition number of 
$\mathbf M(\alpha)$ is related to the separation between its two smallest singular values, which may necessitate increased numerical precision, scaling as 
$\frac{1}{\varepsilon}$. Consequently, one may expect the condition number of the residual matrix $\mathbf M(\alpha)$ to influence the required precision in quantum landscape scanning. This precision, in turn, determines the achievable resolution of the energy grid $\lbrace \alpha_j\rbrace_{j=0,1,...K}$. 

\vspace{0.2cm}

\textbf{Improvements for Gaussian basis sets} -- 
For a specific case of Gaussian basis functions utilized to formulate the collocation equations, it is possible to employ several improvements to the quantum landscape scanning algorithm. These improvements are:
\begin{enumerate}
    \item Quantum-random oracle model (QROM) implementing matrix elements of $\mathbf B$ and $\mathbf B''$ in block-encoding can be implemented with quantum arithmetic. For instance, for the collocation matrix elements $\mathbf B_{ij}$ we have the following representation:
\begin{equation}
    \mathbf B_{ij} = C e^{-(\beta i-\gamma j)^2},
    \label{eq:B-arthmetic}
\end{equation}
for some constants $\beta$ and $\gamma$ and a normalization constant $C$. 
Similarly, also $\mathbf B''$ has a compact form. The expression given in Eq.~\eqref{eq:B-arthmetic} can be approximated with polynomials with Taylor expansion. Therefore, the cost of block-encoding all matrices scales as $\mathcal{O}(\text{polylog}(1/\delta))$, for the multiplicative error $\delta$ in the elements of the unitary matrix~\cite{RuizPerez_2017}. This error is related to the accuracy of singular values of $\mathbf M(\alpha)$ in the following way $\varepsilon = \mathcal{O}(\delta \sqrt{N})$, requiring that $\delta = \mathcal{O}(\varepsilon/\sqrt{N})$. 
Due to polylog dependence, in the notation $\widetilde{\mathcal{O}}$, it does not contribute to the overall scaling.

\mbox{}\vspace{-0.3cm} 

\item The leading contribution to $M_\text{max}$ is expected to come from the potential energy function and $\mathbf B''$, as due to the normalization of the wavefunctions, the max-norm of matrix $\mathbf B$ is independent of its size $N$, i.e.,  $||\mathbf{B}||_\text{max} = \mathcal{O}(1)$. 
We also note that the matrices $\mathbf B''$ and $\mathbf V^{diag}\mathbf B$ are expected to share many common non-zero elements. This is most evident when Gaussian basis functions are used to construct $\mathbf B$, and considering the fact that the diagonal matrix $\mathbf V^{diag}$ multiplies the rows of the $\mathbf B$ matrix without changing the overall number of non-zero elements. Meanwhile, $\mathbf B''$ results from double differentiation of a Gaussian function, whose values still decay exponentially away from its center.
The cost of block-encoding a diagonal matrix scales as $\mathcal{O}\big(\sqrt{N\log(\frac{1}{\delta})}\Big)$~\cite{Gosset}, with $\mathcal{O}(\sqrt{N})$ additional qubits, indicating that it is advisable to perform the matrix multiplication $\mathbf V^{diag}\mathbf B$ classically and block-encode the sum $\mathbf B'' + \mathbf V^{diag}\mathbf{B}$.
In the above, $\delta$ is the multiplicative error. 

\item For equal-width distributed Gaussian basis functions, since both $\mathbf B$ and $\mathbf B''$ are circulant matrices, it suffices to encode a single column for each matrix with QROM to reduce the T-gate cost greatly.

\item If the potential $\hat V$ is given in a functional form, its cost of block-encoding scales as $\mathcal{O}(\text{polylog} (N/\varepsilon))$ via quantum arithmetic~\cite{Gosset}, however more specific resource estimation depends on the potential parameters and accuracy.

\item Gaussian basis functions are localized in space, leading to logarithmic sparsity of the residue matrix $\mathbf M(\alpha)$, in which case $d = \mathcal{O}(\log N)$. 
\item The max-norm $M_{\text{max}}$ of $\mathbf B$ is independent of the number of basis functions as we increase the dimensionality of the matrices.
What is more, the norm of the second derivative matrix $\mathbf B''$ will not grow as we go to higher dimensions in the spatial sense of the Schr\"odinger equation. 
\end{enumerate}
With the above assumptions, the quantum landscape scanning complexity is compared in Table~\ref{tab:simplified_complexity_comparison}.

\onecolumngrid
\begin{center}
\begin{table}[H]
    \centering
\begin{tabular}{|cc|c|}
\hline
\multicolumn{2}{|c|}{classical}    & quantum \\ \cline{1-3}
  matrix inversion &    \multicolumn{2}{|c|}{landscape scanning\hspace{1cm}\phantom{1}}                \\ \cline{1-3}
\multicolumn{1}{|c|}{$N^{2.371}\log(\kappa)+N^2\kappa^{\frac{1}{2}}\log(1/\varepsilon)$} & $KN \log(1/\varepsilon)$ & $\sqrt{KN}\frac{1}{\varepsilon}\,\text{polylog}\, (1/\varepsilon)$     \!\!\!\!\!\!\!\!\! \phantom{$\frac{\int}{\int}$}            \\ \hline
\multicolumn{1}{|c|}{condition number problem} & --- & ---           \\ \hline
\end{tabular}
\caption{Classical and T-gate complexities $\widetilde{\mathcal O}$  for different methods solving the collocation Schr\"odinger equation. Given are classical computing and quantum computing methods, with classical computing methods divided into the matrix inversion technique and landscape scanning discussed in Sec.~\ref{sec:landscape_scanning}. Complexities for the classical computing algorithms are expressed in the number of floating-point operations while quantum computing algorithm has its complexity expressed in the number of non-Clifford T-gates. The table was generated accounting for Gaussian basis sets with simplifications and improvements discussed in the main text for the quantum landscape scanning algorithm. 
Using the $\widetilde{\mathcal O}$ notation, we were able to disregard unimportant $\log N$ factors in the classical landscape scanning, simplifying $\mathcal{O}\big(N\log^2N + KN\log N\log(1/\varepsilon)\big)$ to $\widetilde{\mathcal{O}}\big(KN \log(1/\varepsilon)\big)$.
Here $N$ is the number of basis functions, $K$ is the number of grid points in the landscape method $\lbrace \alpha_j\rbrace_{j=1,2,...,K}$, $\varepsilon$ is the error in the determination of the energies, and $\kappa$ is condition number of $\mathbf B^{\dagger}\mathbf B$.
}\label{tab:simplified_complexity_comparison}
\end{table}
\end{center}
\twocolumngrid

\section{Discussion}\label{sec:discussion}
There are several parameters that govern the problem of determining energies: the target accuracy $\varepsilon$, the number of basis functions $N$, and the number of grid points $K$. The scaling with respect to $\varepsilon$ is most favorable for the matrix-inversion algorithm. However, this method may not be applicable if the matrix inversion problem is ill-posed.

In contrast, landscape-based algorithms, both classical and quantum, perform better with the number of basis functions $N$, scaling as $\mathcal{O}(N^{1/2})$ and $\mathcal{O}(N)$, respectively, under the assumption of inexpensive QROM access, as discussed in the previous section. Notably, the number of grid points $K$ does not directly affect the matrix-inversion algorithm, which complicates direct comparisons among the three methods.

The quantum algorithm described above, through its extension to a larger parameter space and the use of the Quantum Phase Estimation procedure, achieves quadratically better scaling with respect to both the system's dimension and the number of grid points. However, this comes at the cost of worse scaling with respect to the precision: $\mathcal{O}\big(\log(K)\log⁡(KN)\sqrt{KN}\log⁡(1/\varepsilon)/\varepsilon\big)$, see discussion in App.~\ref{app:accuracy}.

For our algorithm, which uses a relatively inexpensive state preparation of an equal superposition of $NK$ singular vectors, the resulting complexity matches that of Grover search over the singular values $\widetilde{\mathbf{M}}$. Consequently, the  $\mathcal{O}(\sqrt{NK})$ complexity scaling in our method is believed to be optimal. However, this result does not necessarily imply optimality for finding the energies of a Hamiltonian. Any improved algorithm would require a fundamentally different state preparation unitary than the one used here and avoid amplitude amplification in the same sense.

Our method can also be applied to estimate the density of states, which is particularly relevant in solid-state physics. Information about the density of states has multiple applications. By scanning the eigenvalue grid $\lambda$ of the matrix $\widetilde{\mathbf M}$, we can construct a histogram that approximates the density of states, even for gapless systems.

The aim of this contribution is to compare, for a given choice of parameters $K$, $N$, and $\varepsilon$ that define the problem at hand, which algorithm is more efficient in terms of query complexity. It is important to note that we make no assumptions about the cost of a single query in the classical or quantum case; in particular, the latter is and will remain more expensive. While continued progress in quantum technologies is unlikely to make quantum computing cheaper per query than classical computing, we observe a faster decline in costs and a more rapid increase in capabilities for quantum technologies compared to their classical counterparts.

Therefore, this comparison should not be seen as an oracle for practical quantum advantage. Rather, it highlights the regimes and problem classes where quantum technologies have the greatest potential to outperform classical methods. What follows is a discussion of several comparative aspects of the algorithms introduced earlier.
  
\vspace{0.2cm}

\textbf{Condition number problem.} --     
    The matrix inversion method in the dense Gaussian basis (highly non-orthonormal) is unfeasible due to the required inversion of matrix $\mathbf B^{\dagger}\mathbf B$, with a high condition number.
    The same problem is less pronounced in landscape scanning, which has a better dependence on $N$ ($N$ vs.\ $N^{2.371}$) and worse dependence on $\varepsilon$ than the matrix inverse method. 
    In comparison, the quantum version of landscape scanning has an even better, quadratically improved dependence on the number of basis functions in T-gate count, at the cost of worse dependence on the accuracy $\varepsilon$ and a possible problem with the max-norm $M_\text{max}$.
    The latter problem might arise with some potentials, but there are potentials for which this will be constant with the basis function size.
    Problems with the inversion of a matrix $\mathbf A$ for a high condition number $\kappa$ occur because it is sensitive to small perturbations $\mathbf A'$, where $||\mathbf A-\mathbf A'|| \leq \epsilon$. 
    Therefore, the error of the result can be of the order of $\kappa^2 \epsilon$~\cite{derivative}.

    In summary, matrix inversion with condition number $2^k$ requires at least $2k+b$ bits of precision, where $b$ is the target precision.  An example discussed in Sec.~\ref{sec:landscape_scanning} showed that $\kappa^2$  reaches $10^{37}$ even for moderate-size Gaussian basis sets. When higher-dimensional problems are considered, this scaling is somewhat alleviated due to the distribution over a high-dimensional volume. Thus, when the condition number of $\mathbf B^\dagger \mathbf B$  is too high, then it is numerically impossible to calculate $(\mathbf B^\dagger \mathbf B)^{-1}$ with any reasonable computer precision. 
    In contrast, in the presented quantum landscape scanning algorithm, the condition number of $(\mathbf B^\dagger \mathbf B)$ is irrelevant. If $E$ is the solution to $(\mathbf H-E \mathbf B)\ket\psi=0 $ and $\|\mathbf H'-\mathbf H\|$, $\|\mathbf B'-\mathbf B\|$,  $|E'-E|\leq\varepsilon$, then minimal singular value of $\mathbf H'-E' \mathbf B' $ is $\Od{\varepsilon}$.

    \vspace{0.2cm}

    \textbf{The number of grid points $K$.} -- 
    In the matrix-inversion algorithm, there is no need to divide the energy window into a grid; thus, there is no direct dependence on this parameter. 
    On the other hand, both classical and quantum landscape scanning algorithms require discretization of the parameter space. 

    This division depends substantially on the width of the dips in the energy landscape, as shown in Figs.~\ref{fig:26functions}-\ref{fig:35-36functions}. Because we aim to measure at least one grid point $\alpha$ capturing the corresponding minimum, the choice of grid must be adjusted accordingly. The widths of the minima depend on the number of basis functions. To successfully find the energies using both landscape methods, it might be necessary to adaptively choose precision $\varepsilon$ of singular values, i.e., the number of grid points $K$ for a fixed energy window $S$.

\vspace{0.2cm}
    
\textbf{Potentials with a potential quantum advantage.} -- 
The quadratic improvement in computational complexity with respect to the matrix size $N$ offered by our quantum algorithm, compared to classical methods, must be considered alongside its scaling with the admissible error level $\varepsilon$ (i.e.,  eigenvalue precision). Even with the advent of sufficiently powerful quantum computers, quantum algorithms will not offer an advantage in all cases.

While the quantum algorithm exhibits more favorable scaling in $N$, typically $\mathcal{O}(\sqrt{N})$ T-gates, compared to $\mathcal{O}(N)$ FLOPs for classical methods (see Tables \ref{tab:complexity_comparison} and ~\ref{tab:simplified_complexity_comparison}) the classical algorithm often has better scaling in precision, e.g., $\mathcal{O}(\log(1/\varepsilon))$. This implies that quantum advantage is most likely to be realized in scenarios where high-dimensional quantum systems are modeled, many eigenvalues are required (large $N$), but only moderate precision is acceptable.

Thus, relative to classical landscape scanning, the quantum algorithm is particularly beneficial when the number of grid points $K$ is large and the error tolerance $\varepsilon$ is not extremely small. This is common in systems with shallow potential energy wells and dense spectra, such as potential energy surfaces of polyatomic molecules. In such systems, the density of eigenvalues increases rapidly with energy, making it difficult for classical methods to recover all highly excited states. Our quantum algorithm, by contrast, enables estimation of the density of states with controllable resolution and favorable scaling in the number of histogram bins $K$. These advantages also suggest potential applicability to quasi-continuum problems in solid-state physics, where classical methods similarly encounter limitations.

\vspace{0.2cm}

\textbf{Sampling problem.} -- 
To construct an energy histogram using the quantum or classical landscape scanning algorithm, multiple runs are required, as each execution yields only a single eigenvalue. Quantum landscape scanning however, is defined over an extended Hilbert space encoding candidate energies $\alpha$ in quantum states. Utilizing quantum superposition and amplitude amplification, fewer calls to eigenvalue computation are required than in the classical computing case. Specifically, the probability of detecting a given “dip” in eigenvalues of the residue matrix is proportional to its width. In systems like the harmonic oscillator or Morse potential, dip widths can vary by an order of magnitude across energy scales. This implies that to avoid oversampling higher energies, it may be useful to partition the energy window $S$. On the other hand, if high-energy states are of particular interest, this behavior becomes advantageous, as low-energy states are typically easier to access by other means.

Assuming the widths are roughly uniform, the total number of runs required to resolve all $L$ dips scales as $\mathcal{O}(L \log L)$. This introduces an overhead in both quantum and classical landscape scanning algorithms, whereas in the matrix inverse method, this overhead is concealed under the shift to the Hamiltonian and the size of the Krylov basis. Nevertheless, for realistic scenarios with $L < 100$, this overhead remains modest.

\vspace{0.2cm}

\textbf{QSVT-based bandpass filter as an alternative.} -- 
An alternative approach based on quantum singular value transformation (QSVT) can be used to construct band-pass filters acting on each block $\mathbf M(\alpha)$, localized near a target eigenvalue $\lambda_0$ \cite{Lin2020,Martyn2021}. For the extraction of a single parameter value, such a method can achieve a complexity comparable to QPE combined with amplitude amplification. However, in landscape-scanning problems involving many values of $\alpha$, QSVT-based filters act independently on each block and require either repetition over grid points or additional amplification overhead. By contrast, the QPE+AA construction employed here enables global selection in superposition, directly amplifying the values of $\alpha$ associated with eigenvalues in a prescribed spectral window of $M(\alpha)$.

\section{Concluding remarks}\label{sec:conclusions}
In this work, we presented a quantum algorithm for computing eigenvalues and singular values of families of parametrized matrices. We showed that our method achieves an up to quadratic speedup over both classical algorithms and the direct application of Quantum Phase Estimation to the matrix family.

In particular, we demonstrated the usefulness of our approach for solving the Schrödinger equation, with the most significant advantages observed in scenarios where: (a) multiple eigenvalues are required; and (b) the matrix representation of the Schrödinger equation takes the form of a generalized eigenvalue problem.

The core idea of our technique is to convert the generalized eigenvalue problem into a task of scanning (quantum landscape scanning) the singular values of an appropriately defined residue matrix as a function of the matrix family parameter, and then identifying the minima. We adapt this procedure to fault-tolerant digital quantum simulators and illustrate its application using the pseudospectral collocation method for solving differential equations, including the Schrödinger equation.
Unlike classical collocation, our method does not rely on matrix inversion, which enables the computation of highly excited states in molecular systems, where standard methods often struggle due to large condition numbers. We demonstrated this capability using two analytically solvable models: the harmonic oscillator and the Morse potential.

Our quantum landscape scanning method scales favorably with the number of basis functions $N$: it achieves $\mathcal{O}(N^{1/2})$ T-gate complexity, compared to at least $\mathcal{O}(N)$ FLOPS scaling for both classical landscape scanning and matrix inversion approaches. This speedup arises from encoding candidate energy levels (i.e.,  parameters of the residue matrix family) into a quantum superposition and applying amplitude amplification simultaneously in the space of candidate eigenvalues and the spectrum of the residue matrix. While the classical matrix inversion method offers better scaling in eigenvalue precision $\varepsilon$, it becomes impractical in the high-energy regime due to rapidly increasing condition numbers.

Our conclusion is that detailed investigations of specific physical problems, especially those well-represented by collocation methods, should be pursued to quantitatively assess the advantages of quantum computing. While our results indicate favorable polynomial quantum speedups, care must be taken to account for context-dependent evaluation versus classical computing techniques.

\mbox{}\\

\textbf{Acknowledgments.} --
We acknowledge funding from the European Innovation Council accelerator grant COMFTQUA, no. 190183782. The authors are grateful to Tucker Carrington for the initial discussions on the classical landscape method.

\bibliography{bibliography}

@article{Mukherjee2022,
  title = {Simulations of molecular photodynamics in long timescales},
  volume = {380},
  ISSN = {1471-2962},
  url = {http://dx.doi.org/10.1098/rsta.2020.0382},
  DOI = {10.1098/rsta.2020.0382},
  number = {2223},
  journal = {Philosophical Transactions of the Royal Society A: Mathematical,  Physical and Engineering Sciences},
  publisher = {The Royal Society},
  author = {Mukherjee,  Saikat and Pinheiro,  Max and Demoulin,  Baptiste and Barbatti,  Mario},
  year = {2022},
  month = mar 
}

@article{Weight2023,
  title = {Theory and modeling of light-matter interactions in chemistry: current and future},
  volume = {25},
  ISSN = {1463-9084},
  url = {http://dx.doi.org/10.1039/D3CP01415K},
  DOI = {10.1039/d3cp01415k},
  number = {46},
  journal = {Physical Chemistry Chemical Physics},
  publisher = {Royal Society of Chemistry (RSC)},
  author = {Weight,  Braden M. and Li,  Xinyang and Zhang,  Yu},
  year = {2023},
  pages = {31554–31577}
}

@article{Kittel1965,
  title = {Quantum Theory of Solids},
  volume = {33},
  ISSN = {1943-2909},
  url = {http://dx.doi.org/10.1119/1.1953050},
  DOI = {10.1119/1.1953050},
  number = {6},
  journal = {American Journal of Physics},
  publisher = {American Association of Physics Teachers (AAPT)},
  author = {Kittel,  C. and Kahn,  Peter B.},
  year = {1965},
  month = jun,
  pages = {517–518}
}

@article{Bunker1999,
	title = {Molecular Symmetry and Spectroscopy},
	volume = {52},
	ISSN = {1945-0699},
	url = {http://dx.doi.org/10.1063/1.882827},
	DOI = {10.1063/1.882827},
	number = {9},
	journal = {Physics Today},
	publisher = {AIP Publishing},
	author = {Bunker,  Philip R. and Jensen,  Per and Jungen,  Christian},
	year = {1999},
	month = sep,
	pages = {63â€“64}
}

@article{Tilly2022,
  title = {The Variational Quantum Eigensolver: A review of methods and best practices},
  volume = {986},
  ISSN = {0370-1573},
  url = {http://dx.doi.org/10.1016/j.physrep.2022.08.003},
  DOI = {10.1016/j.physrep.2022.08.003},
  journal = {Physics Reports},
  publisher = {Elsevier BV},
  author = {Tilly,  Jules and Chen,  Hongxiang and Cao,  Shuxiang and Picozzi,  Dario and Setia,  Kanav and Li,  Ying and Grant,  Edward and Wossnig,  Leonard and Rungger,  Ivan and Booth,  George H. and Tennyson,  Jonathan},
  year = {2022},
  month = nov,
  pages = {1–128}
}

@article{nieminen2002,
  author  = {Nieminen, R. M.},
  title   = {Electronic structure calculations for materials design},
  journal = {Journal of Physics: Condensed Matter},
  volume  = {14},
  number  = {11},
  pages   = {2859--2876},
  year    = {2002},
  doi     = {10.1088/0953-8984/14/11/201}
}

@article{cardona2005,
  author  = {Cardona, M. and Thewalt, M. L. W.},
  title   = {Isotope effects on the optical spectra of semiconductors},
  journal = {Reviews of Modern Physics},
  volume  = {77},
  pages   = {1173--1224},
  year    = {2005},
  doi     = {10.1103/RevModPhys.77.1173}
}

@book{kittel,
  author    = {Charles Kittel},
  title     = {Introduction to Solid State Physics},
  edition   = {8},
  publisher = {Wiley},
  year      = {2004},
  isbn      = {9780471415268}
}

@article{Mariano2025,
  title = {The role of electronic excited states in the spin-lattice relaxation of spin-1/2 molecules},
  volume = {11},
  ISSN = {2375-2548},
  url = {http://dx.doi.org/10.1126/sciadv.adr0168},
  DOI = {10.1126/sciadv.adr0168},
  number = {7},
  journal = {Science Advances},
  publisher = {American Association for the Advancement of Science (AAAS)},
  author = {Mariano,  Lorenzo A. and Nguyen,  Vu Ha Anh and Petersen,  Jonatan B. and Bj\"{o}rnsson,  Magnus and Bendix,  Jesper and Eaton,  Gareth R. and Eaton,  Sandra S. and Lunghi,  Alessandro},
  year = {2025},
  month = feb 
}

@article{Richerme2023,
  title = {Quantum Computation of Hydrogen Bond Dynamics and Vibrational Spectra},
  volume = {14},
  ISSN = {1948-7185},
  url = {http://dx.doi.org/10.1021/acs.jpclett.3c01601},
  DOI = {10.1021/acs.jpclett.3c01601},
  number = {32},
  journal = {The Journal of Physical Chemistry Letters},
  publisher = {American Chemical Society (ACS)},
  author = {Richerme,  Philip and Revelle,  Melissa C. and Yale,  Christopher G. and Lobser,  Daniel and Burch,  Ashlyn D. and Clark,  Susan M. and Saha,  Debadrita and Lopez-Ruiz,  Miguel Angel and Dwivedi,  Anurag and Smith,  Jeremy M. and Norrell,  Sam A. and Sabry,  Amr and Iyengar,  Srinivasan S.},
  year = {2023},
  month = aug,
  pages = {7256–7263}
}

@article{Lee2023,
  title = {Evaluating the evidence for exponential quantum advantage in ground-state quantum chemistry},
  volume = {14},
  ISSN = {2041-1723},
  url = {http://dx.doi.org/10.1038/s41467-023-37587-6},
  DOI = {10.1038/s41467-023-37587-6},
  number = {1},
  journal = {Nature Communications},
  publisher = {Springer Science and Business Media LLC},
  author = {Lee,  Seunghoon and Lee,  Joonho and Zhai,  Huanchen and Tong,  Yu and Dalzell,  Alexander M. and Kumar,  Ashutosh and Helms,  Phillip and Gray,  Johnnie and Cui,  Zhi-Hao and Liu,  Wenyuan and Kastoryano,  Michael and Babbush,  Ryan and Preskill,  John and Reichman,  David R. and Campbell,  Earl T. and Valeev,  Edward F. and Lin,  Lin and Chan,  Garnet Kin-Lic},
  year = {2023},
  month = apr 
}

@article{Ding2024,
  title = {Quantum Multiple Eigenvalue {Gaussian} filtered Search: an efficient and versatile quantum phase estimation method},
  volume = {8},
  ISSN = {2521-327X},
  url = {http://dx.doi.org/10.22331/q-2024-10-02-1487},
  DOI = {10.22331/q-2024-10-02-1487},
  journal = {Quantum},
  publisher = {Verein zur Forderung des Open Access Publizierens in den Quantenwissenschaften},
  author = {Ding,  Zhiyan and Li,  Haoya and Lin,  Lin and Ni,  HongKang and Ying,  Lexing and Zhang,  Ruizhe},
  year = {2024},
  month = oct,
  pages = {1487}
}

@book{recipes,
author = {Press, William H. and Teukolsky, Saul A. and Vetterling, William T. and Flannery, Brian P.},
title = {Numerical Recipes 3rd Edition: The Art of Scientific Computing},
year = {2007},
isbn = {0521880688},
publisher = {Cambridge University Press},
address = {USA},
edition = {3}
}

@article{Deka2025,
  title = {Simultaneously Optimizing Symmetry Shifts and Tensor Factorizations for Cost-Efficient Fault-Tolerant Quantum Simulations of Electronic {Hamiltonians}},
  volume = {21},
  ISSN = {1549-9626},
  url = {http://dx.doi.org/10.1021/acs.jctc.4c01722},
  DOI = {10.1021/acs.jctc.4c01722},
  number = {9},
  journal = {Journal of Chemical Theory and Computation},
  publisher = {American Chemical Society (ACS)},
  author = {Deka,  Konrad and Zak,  Emil},
  year = {2025},
  month = apr,
  pages = {4458–4465}
}

@article{burg2021,
  title = {Quantum computing enhanced computational catalysis},
  author = {von Burg, Vera and Low, Guang Hao and H\"aner, Thomas and Steiger, Damian S. and Reiher, Markus and Roetteler, Martin and Troyer, Matthias},
  journal = {Phys. Rev. Res.},
  volume = {3},
  issue = {3},
  pages = {033055},
  numpages = {16},
  year = {2021},
  publisher = {American Physical Society},
  doi = {10.1103/PhysRevResearch.3.033055},
  url = {https://link.aps.org/doi/10.1103/PhysRevResearch.3.033055}
}

@article{su2021,
  title = {Fault-Tolerant Quantum Simulations of Chemistry in First Quantization},
  author = {Su, Yuan and Berry, Dominic W. and Wiebe, Nathan and Rubin, Nicholas and Babbush, Ryan},
  journal = {PRX Quantum},
  volume = {2},
  issue = {4},
  pages = {040332},
  numpages = {82},
  year = {2021},
  publisher = {American Physical Society},
  doi = {10.1103/PRXQuantum.2.040332},
  url = {https://link.aps.org/doi/10.1103/PRXQuantum.2.040332}
}

@article{trenev2025,
  doi = {10.22331/q-2025-02-11-1630},
  url = {https://doi.org/10.22331/q-2025-02-11-1630},
  title = {Refining resource estimation for the quantum computation of vibrational molecular spectra through {T}rotter error analysis},
  author = {Trenev, Dimitar and Ollitrault, Pauline J and Harwood, Stuart M. and Gujarati, Tanvi P. and Raman, Sumathy and Mezzacapo, Antonio and Mostame, Sarah},
  journal = {{Quantum}},
  issn = {2521-327X},
  publisher = {{Verein zur F{\"{o}}rderung des Open Access Publizierens in den Quantenwissenschaften}},
  volume = {9},
  pages = {1630},
  month = feb,
  year = {2025}
}

@article{lee2021,
  title = {Even More Efficient Quantum Computations of Chemistry Through Tensor Hypercontraction},
  author = {Lee, Joonho and Berry, Dominic W. and Gidney, Craig and Huggins, William J. and McClean, Jarrod R. and Wiebe, Nathan and Babbush, Ryan},
  journal = {PRX Quantum},
  volume = {2},
  issue = {3},
  pages = {030305},
  numpages = {62},
  year = {2021},
  publisher = {American Physical Society},
  doi = {10.1103/PRXQuantum.2.030305},
  url = {https://link.aps.org/doi/10.1103/PRXQuantum.2.030305}
}

@article{Carrington2021,
  title = {Using collocation to study the vibrational dynamics of molecules},
  volume = {248},
  ISSN = {1386-1425},
  url = {http://dx.doi.org/10.1016/j.saa.2020.119158},
  DOI = {10.1016/j.saa.2020.119158},
  journal = {Spectrochimica Acta Part A: Molecular and Biomolecular Spectroscopy},
  publisher = {Elsevier BV},
  author = {Carrington,  Tucker},
  year = {2021},
  month = mar,
  pages = {119158}
}

@misc{Gosset,
  doi = {10.48550/ARXIV.2411.04790},
  url = {https://arxiv.org/abs/2411.04790},
  author = {Gosset,  David and Kothari,  Robin and Wu,  Kewen},
  keywords = {Quantum Physics (quant-ph),  FOS: Physical sciences,  FOS: Physical sciences},
  title = {Quantum state preparation with optimal {T-count}},
  publisher = {arXiv},
  year = {2024},
  copyright = {arXiv.org perpetual,  non-exclusive license}
}

@article{Snderhauf2024,
  title = {Block-encoding structured matrices for data input in quantum computing},
  volume = {8},
  ISSN = {2521-327X},
  url = {http://dx.doi.org/10.22331/q-2024-01-11-1226},
  DOI = {10.22331/q-2024-01-11-1226},
  journal = {Quantum},
  publisher = {Verein zur Forderung des Open Access Publizierens in den Quantenwissenschaften},
  author = {S\"{u}nderhauf,  Christoph and Campbell,  Earl and Camps,  Joan},
  year = {2024},
  month = jan,
  pages = {1226}
}

@misc{camps,
  doi = {10.48550/ARXIV.2203.10236},
  url = {https://arxiv.org/abs/2203.10236},
  author = {Camps,  Daan and Lin,  Lin and Van Beeumen,  Roel and Yang,  Chao},
  keywords = {Quantum Physics (quant-ph),  Numerical Analysis (math.NA),  Quantum Algebra (math.QA),  FOS: Physical sciences,  FOS: Physical sciences,  FOS: Mathematics,  FOS: Mathematics,  68Q12,  81P68},
  title = {Explicit Quantum Circuits for Block Encodings of Certain Sparse Matrices},
  publisher = {arXiv},
  year = {2022},
  copyright = {Creative Commons Attribution 4.0 International}
}

@article{mukhopadhyay2024,
  title = {Quantum Simulation of the First-Quantized {Pauli-Fierz Hamiltonian}},
  author = {Mukhopadhyay, Priyanka and Stetina, Torin F. and Wiebe, Nathan},
  journal = {PRX Quantum},
  volume = {5},
  issue = {1},
  pages = {010345},
  numpages = {57},
  year = {2024},
  publisher = {American Physical Society},
  doi = {10.1103/PRXQuantum.5.010345},
  url = {https://link.aps.org/doi/10.1103/PRXQuantum.5.010345}
}

@article{babbush2018,
  title = {Encoding Electronic Spectra in Quantum Circuits with Linear {T} Complexity},
  author = {Babbush, Ryan and Gidney, Craig and Berry, Dominic W. and Wiebe, Nathan and McClean, Jarrod and Paler, Alexandru and Fowler, Austin and Neven, Hartmut},
  journal = {Phys. Rev. X},
  volume = {8},
  issue = {4},
  pages = {041015},
  numpages = {36},
  year = {2018},
  publisher = {American Physical Society},
  doi = {10.1103/PhysRevX.8.041015},
  url = {https://link.aps.org/doi/10.1103/PhysRevX.8.041015}
}

@misc{musco,
  doi = {10.48550/ARXIV.1708.07788},
  url = {https://arxiv.org/abs/1708.07788},
  author = {Musco,  Cameron and Musco,  Christopher and Sidford,  Aaron},
  keywords = {Data Structures and Algorithms (cs.DS),  Numerical Analysis (math.NA),  FOS: Computer and information sciences,  FOS: Computer and information sciences,  FOS: Mathematics,  FOS: Mathematics},
  title = {Stability of the {Lanczos} Method for Matrix Function Approximation},
  publisher = {arXiv},
  year = {2017},
  copyright = {arXiv.org perpetual,  non-exclusive license}
}

@book{Golub1996,
  title = {Matrix computations},
  publisher = {Johns Hopkins University Press},
  author = {Golub, G. and van Loan, C.},
  year = {1996}
}

@article{Saad1986,
  title = {{GMRES}: A Generalized Minimal Residual Algorithm for Solving Nonsymmetric Linear Systems},
  volume = {7},
  ISSN = {2168-3417},
  url = {http://dx.doi.org/10.1137/0907058},
  DOI = {10.1137/0907058},
  number = {3},
  journal = {SIAM Journal on Scientific and Statistical Computing},
  publisher = {Society for Industrial & Applied Mathematics (SIAM)},
  author = {Saad,  Youcef and Schultz,  Martin H.},
  year = {1986},
  month = jul,
  pages = {856–869}
}

@article{Alman2024,
  title = {A Refined Laser Method and Faster Matrix Multiplication},
  volume = {3},
  ISSN = {2751-4838},
  url = {http://dx.doi.org/10.46298/theoretics.24.21},
  DOI = {10.46298/theoretics.24.21},
  journal = {TheoretiCS},
  publisher = {Centre pour la Communication Scientifique Directe (CCSD)},
  author = {Alman,  Josh and Williams,  Virginia Vassilevska},
  year = {2024},
  month = sep 
}

@article{Martyn2021,
  title = {Grand Unification of Quantum Algorithms},
  volume = {2},
  ISSN = {2691-3399},
  url = {http://dx.doi.org/10.1103/PRXQuantum.2.040203},
  DOI = {10.1103/prxquantum.2.040203},
  number = {4},
  journal = {PRX Quantum},
  publisher = {American Physical Society (APS)},
  author = {Martyn,  John M. and Rossi,  Zane M. and Tan,  Andrew K. and Chuang,  Isaac L.},
  year = {2021},
  month = dec 
}

@misc{williams,
  doi = {10.48550/ARXIV.2307.07970},
  url = {https://arxiv.org/abs/2307.07970},
  author = {Williams,  Virginia Vassilevska and Xu,  Yinzhan and Xu,  Zixuan and Zhou,  Renfei},
  keywords = {Data Structures and Algorithms (cs.DS),  Computational Complexity (cs.CC),  FOS: Computer and information sciences,  FOS: Computer and information sciences},
  title = {New Bounds for Matrix Multiplication: from Alpha to Omega},
  publisher = {arXiv},
  year = {2023},
  copyright = {arXiv.org perpetual,  non-exclusive license}
}

@article{Harrow2009,
  title = {Quantum Algorithm for Linear Systems of Equations},
  volume = {103},
  ISSN = {1079-7114},
  url = {http://dx.doi.org/10.1103/PhysRevLett.103.150502},
  DOI = {10.1103/physrevlett.103.150502},
  number = {15},
  journal = {Physical Review Letters},
  publisher = {American Physical Society (APS)},
  author = {Harrow,  Aram W. and Hassidim,  Avinatan and Lloyd,  Seth},
  year = {2009},
  month = oct 
}

@misc{Zimboras,
  doi = {10.48550/ARXIV.2501.05694},
  url = {https://arxiv.org/abs/2501.05694},
  author = {Zimborás,  Zoltán and Koczor,  Bálint and Holmes,  Zoë and Borrelli,  Elsi-Mari and Gilyén,  András and Huang,  Hsin-Yuan and Cai,  Zhenyu and Acín,  Antonio and Aolita,  Leandro and Banchi,  Leonardo and Brandão,  Fernando G. S. L. and Cavalcanti,  Daniel and Cubitt,  Toby and Filippov,  Sergey N. and García-Pérez,  Guillermo and Goold,  John and Kálmán,  Orsolya and Kyoseva,  Elica and Rossi,  Matteo A. C. and Sokolov,  Boris and Tavernelli,  Ivano and Maniscalco,  Sabrina},
  keywords = {Quantum Physics (quant-ph),  FOS: Physical sciences,  FOS: Physical sciences},
  title = {Myths around quantum computation before full fault tolerance: What no-go theorems rule out and what they don't},
  publisher = {arXiv},
  year = {2025},
  copyright = {arXiv.org perpetual,  non-exclusive license}
}

@article{Zak2019,
  title = {Using collocation and a hierarchical basis to solve the vibrational {Schr\"{o}dinger} equation},
  volume = {150},
  ISSN = {1089-7690},
  url = {http://dx.doi.org/10.1063/1.5096169},
  DOI = {10.1063/1.5096169},
  number = {20},
  journal = {The Journal of Chemical Physics},
  publisher = {AIP Publishing},
  author = {Zak,  Emil J. and Carrington,  Tucker},
  year = {2019},
  month = may 
}

@article{Simmons2023,
  title = {Computing vibrational spectra using a new collocation method with a pruned basis and more points than basis functions: Avoiding quadrature},
  volume = {158},
  ISSN = {1089-7690},
  url = {http://dx.doi.org/10.1063/5.0146703},
  DOI = {10.1063/5.0146703},
  number = {14},
  journal = {The Journal of Chemical Physics},
  publisher = {AIP Publishing},
  author = {Simmons,  Jesse and Carrington,  Tucker},
  year = {2023},
  month = apr 
}

@article{Hashemi2022,
  title = {Rectangular eigenvalue problems},
  volume = {48},
  ISSN = {1572-9044},
  url = {http://dx.doi.org/10.1007/s10444-022-09994-8},
  DOI = {10.1007/s10444-022-09994-8},
  number = {6},
  journal = {Advances in Computational Mathematics},
  publisher = {Springer Science and Business Media LLC},
  author = {Hashemi,  Behnam and Nakatsukasa,  Yuji and Trefethen,  Lloyd N.},
  year = {2022},
  month = nov 
}

@article{Brown2015,
  title = {Using an iterative eigensolver to compute vibrational energies with phase-spaced localized basis functions},
  volume = {143},
  ISSN = {1089-7690},
  url = {http://dx.doi.org/10.1063/1.4926805},
  DOI = {10.1063/1.4926805},
  number = {4},
  journal = {The Journal of Chemical Physics},
  publisher = {AIP Publishing},
  author = {Brown,  James and Carrington,  Tucker},
  year = {2015},
  month = jul 
}

@misc{Dusson,
  doi = {10.48550/ARXIV.2410.04943},
  url = {https://arxiv.org/abs/2410.04943},
  author = {Dusson,  Geneviève and Dupuy,  Mi-Song and Lygatsika,  Ioanna-Maria},
  keywords = {Numerical Analysis (math.NA),  FOS: Mathematics,  FOS: Mathematics},
  title = {A posteriori error estimates for {Schr{\"{o}}dinger} operators discretized with linear combinations of atomic orbitals},
  publisher = {arXiv},
  year = {2024},
  copyright = {arXiv.org perpetual,  non-exclusive license}
}

@article{Hu1998,
  title = {The collocation method based on a generalized inverse multiquadric basis for bound-state problems},
  volume = {113},
  ISSN = {0010-4655},
  url = {http://dx.doi.org/10.1016/S0010-4655(98)00096-4},
  DOI = {10.1016/s0010-4655(98)00096-4},
  number = {2–3},
  journal = {Computer Physics Communications},
  publisher = {Elsevier BV},
  author = {Hu,  Xu-Guang and Ho,  Tak-San and Rabitz,  Herschel},
  year = {1998},
  month = oct,
  pages = {168–179}
}

@article{Burghardt2008,
  title = {{Multimode quantum dynamics using Gaussian wavepackets: The Gaussian-based multiconfiguration time-dependent Hartree (G-MCTDH) method applied to the absorption spectrum of pyrazine}},
  volume = {129},
  ISSN = {1089-7690},
  url = {http://dx.doi.org/10.1063/1.2996349},
  DOI = {10.1063/1.2996349},
  number = {17},
  journal = {The Journal of Chemical Physics},
  publisher = {AIP Publishing},
  author = {Burghardt,  I. and Giri,  K. and Worth,  G. A.},
  year = {2008},
  month = nov 
}

@article{Richings2015,
  title = {{Quantum dynamics simulations using Gaussian wavepackets: the vMCG method}},
  volume = {34},
  ISSN = {1366-591X},
  url = {http://dx.doi.org/10.1080/0144235X.2015.1051354},
  DOI = {10.1080/0144235x.2015.1051354},
  number = {2},
  journal = {International Reviews in Physical Chemistry},
  publisher = {Informa UK Limited},
  author = {Richings,  G.W. and Polyak,  I. and Spinlove,  K.E. and Worth,  G.A. and Burghardt,  I. and Lasorne,  B.},
  year = {2015},
  month = apr,
  pages = {269–308}
}

@article{Poirier2000,
  title = {Efficient distributed {Gaussian} basis for rovibrational spectroscopy calculations},
  volume = {113},
  ISSN = {1089-7690},
  url = {http://dx.doi.org/10.1063/1.481787},
  DOI = {10.1063/1.481787},
  number = {1},
  journal = {The Journal of Chemical Physics},
  publisher = {AIP Publishing},
  author = {Poirier,  Bill and Light,  J. C.},
  year = {2000},
  month = jul,
  pages = {211–217}
}

@book{Helgaker2000,
  doi = {10.1002/9781119019572},
  url = {https://doi.org/10.1002/9781119019572},
  year = {2000},
  month = aug,
  publisher = {John Wiley {\&} Sons,  Ltd},
  author = {Trygve Helgaker and Poul J{\o}rgensen and Jeppe Olsen},
  title = {Molecular Electronic-Structure Theory}
}

@article{gaborig,
  doi = {10.1063/1.3246593},
  url = {https://doi.org/10.1063/1.3246593},
  year  = {2009},
  publisher = {{AIP} Publishing},
  volume = {131},
  number = {17},
  pages = {174103},
  author = {Gustavo Avila and Tucker Carrington},
  title = {Nonproduct quadrature grids for solving the vibrational {Schr\"{o}dinger} equation},
  journal = {J. Chem. Phys.},
}

@article{Szalay2011,
  doi = {10.1007/s10910-011-9843-2},
  url = {https://doi.org/10.1007/s10910-011-9843-2},
  year = {2011},
  month = jun,
  publisher = {Springer Science and Business Media {LLC}},
  volume = {50},
  number = {3},
  pages = {636--651},
  author = {Viktor Szalay and Tam{\'{a}}s Szidarovszky and G{\'{a}}bor Czak{\'{o}} and Attila G. Cs{\'{a}}sz{\'{a}}r},
  title = {A paradox of grid-based representation techniques: accurate eigenvalues from inaccurate matrix elements},
  journal = {Journal of Mathematical Chemistry}
}

@article{Carrington2017,
  title = {Perspective: Computing (ro-)vibrational spectra of molecules with more than
          four atoms},
  volume = {146},
  ISSN = {1089-7690},
  url = {http://dx.doi.org/10.1063/1.4979117},
  DOI = {10.1063/1.4979117},
  number = {12},
  journal = {The Journal of Chemical Physics},
  publisher = {AIP Publishing},
  author = {Carrington,  Tucker},
  year = {2017},
  month = mar 
}

@article{Ford1974,
  title = {The generalized eigenvalue problem in quantum chemistry},
  volume = {8},
  ISSN = {0010-4655},
  url = {http://dx.doi.org/10.1016/0010-4655(74)90011-3},
  DOI = {10.1016/0010-4655(74)90011-3},
  number = {5},
  journal = {Computer Physics Communications},
  publisher = {Elsevier BV},
  author = {Ford,  Brian and Hall,  George},
  year = {1974},
  month = dec,
  pages = {337–348}
}

@article{Boys1969,
  volume = {309},
  ISSN = {0080-4630},
  url = {http://dx.doi.org/10.1098/rspa.1969.0037},
  DOI = {10.1098/rspa.1969.0037},
  number = {1497},
  author = {Boys, Samuel},
  journal = {Proceedings of the Royal Society of London. A. Mathematical and Physical Sciences},
  publisher = {The Royal Society},
  year = {1969},
  month = mar,
  pages = {195–208}
}

@article{Manzhos2023,
  title = {Using Collocation to Solve the {Schr\"{o}dinger} Equation},
  volume = {19},
  ISSN = {1549-9626},
  url = {http://dx.doi.org/10.1021/acs.jctc.2c01232},
  DOI = {10.1021/acs.jctc.2c01232},
  number = {6},
  journal = {Journal of Chemical Theory and Computation},
  publisher = {American Chemical Society (ACS)},
  author = {Manzhos,  Sergei and Ihara,  Manabu and Carrington,  Tucker},
  year = {2023},
  month = mar,
  pages = {1641–1656}
}

@article{Avila2015,
  title = {A multi-dimensional {Smolyak} collocation method in curvilinear coordinates for computing vibrational spectra},
  volume = {143},
  ISSN = {1089-7690},
  url = {http://dx.doi.org/10.1063/1.4936294},
  DOI = {10.1063/1.4936294},
  number = {21},
  journal = {The Journal of Chemical Physics},
  publisher = {AIP Publishing},
  author = {Avila,  Gustavo and Carrington,  Tucker},
  year = {2015},
  month = dec 
}

@article{Peet1989,
  title = {The collocation method for calculating vibrational bound states of molecular systems—with application to Ar–HCl},
  volume = {90},
  ISSN = {1089-7690},
  url = {http://dx.doi.org/10.1063/1.456068},
  DOI = {10.1063/1.456068},
  number = {3},
  journal = {The Journal of Chemical Physics},
  publisher = {AIP Publishing},
  author = {Peet,  Andrew C. and Yang,  Weitao},
  year = {1989},
  month = feb,
  pages = {1746–1751}
}

@book{Lehoucq1998,
  title = {ARPACK Users’ Guide: Solution of Large-Scale Eigenvalue Problems with Implicitly Restarted Arnoldi Methods},
  ISBN = {9780898719628},
  url = {http://dx.doi.org/10.1137/1.9780898719628},
  DOI = {10.1137/1.9780898719628},
  publisher = {Society for Industrial and Applied Mathematics},
  author = {Lehoucq,  R. B. and Sorensen,  D. C. and Yang,  C.},
  year = {1998},
  month = jan 
}

@article{Kerzner_2024,
  title = {A square-root speedup for finding the smallest eigenvalue},
  volume = {9},
  ISSN = {2058-9565},
  url = {http://dx.doi.org/10.1088/2058-9565/ad6a36},
  DOI = {10.1088/2058-9565/ad6a36},
  number = {4},
  journal = {Quantum Science and Technology},
  publisher = {IOP Publishing},
  author = {Kerzner,  Alex and Gheorghiu,  Vlad and Mosca,  Michele and Guilbaud,  Thomas and Carminati,  Federico and Fracas,  Fabio and Dellantonio,  Luca},
  year = {2024},
  month = aug,
  pages = {045025}
}

@book{Hogben_2013,
  title = {Handbook of Linear Algebra},
  ISBN = {9780429185533},
  url = {http://dx.doi.org/10.1201/b16113},
  DOI = {10.1201/b16113},
  publisher = {Chapman and Hall/CRC},
  year = {2013},
  author = {Hogben, Leslie},
  month = nov 
}

@article{Jin_2020,
  title = {A query‐based quantum eigensolver},
  volume = {2},
  ISSN = {2577-0470},
  url = {http://dx.doi.org/10.1002/que2.49},
  DOI = {10.1002/que2.49},
  number = {3},
  journal = {Quantum Engineering},
  publisher = {Wiley},
  author = {Jin,  Shan and Wu,  Shaojun and Zhou,  Guanyu and Li,  Ying and Li,  Lvzhou and Li,  Bo and Wang,  Xiaoting},
  year = {2020},
  month = aug 
}

@article{Myllykoski_2020,
  title = {Task‐based,  GPU‐accelerated and robust library for solving dense nonsymmetric eigenvalue problems},
  volume = {33},
  ISSN = {1532-0634},
  url = {http://dx.doi.org/10.1002/cpe.5915},
  DOI = {10.1002/cpe.5915},
  number = {11},
  journal = {Concurrency and Computation: Practice and Experience},
  publisher = {Wiley},
  author = {Myllykoski,  Mirko and Kjelgaard Mikkelsen,  Carl Christian},
  year = {2020},
  month = aug 
}

@misc{Chen_2024,
  doi = {10.48550/ARXIV.2405.14765},
  url = {https://arxiv.org/abs/2405.14765},
  author = {Chen,  Yanlin and Gilyén,  András and de Wolf,  Ronald},
  keywords = {Quantum Physics (quant-ph),  Data Structures and Algorithms (cs.DS),  FOS: Physical sciences,  FOS: Physical sciences,  FOS: Computer and information sciences,  FOS: Computer and information sciences},
  title = {A Quantum Speed-Up for Approximating the Top Eigenvectors of a Matrix},
  publisher = {arXiv},
  year = {2024},
  copyright = {Creative Commons Attribution 4.0 International}
}

@article{Lin2020,
  title = {Optimal polynomial based quantum eigenstate filtering with application to solving quantum linear systems},
  volume = {4},
  ISSN = {2521-327X},
  url = {http://dx.doi.org/10.22331/q-2020-11-11-361},
  DOI = {10.22331/q-2020-11-11-361},
  journal = {Quantum},
  publisher = {Verein zur Forderung des Open Access Publizierens in den Quantenwissenschaften},
  author = {Lin,  Lin and Tong,  Yu},
  year = {2020},
  month = nov,
  pages = {361}
}

@article{Low_2017,
  title = {Optimal {Hamiltonian} Simulation by Quantum Signal Processing},
  volume = {118},
  ISSN = {1079-7114},
  url = {http://dx.doi.org/10.1103/PhysRevLett.118.010501},
  DOI = {10.1103/physrevlett.118.010501},
  number = {1},
  journal = {Physical Review Letters},
  publisher = {American Physical Society (APS)},
  author = {Low,  Guang Hao and Chuang,  Isaac L.},
  year = {2017},
  month = jan 
}

@article{RuizPerez_2017,
  title = {Quantum arithmetic with the quantum {Fourier} transform},
  volume = {16},
  ISSN = {1573-1332},
  url = {http://dx.doi.org/10.1007/s11128-017-1603-1},
  DOI = {10.1007/s11128-017-1603-1},
  number = {6},
  journal = {Quantum Information Processing},
  publisher = {Springer Science and Business Media LLC},
  author = {Ruiz-Perez,  Lidia and Garcia-Escartin,  Juan Carlos},
  year = {2017},
  month = apr 
}

@article{Shao_2022,
  title = {Computing Eigenvalues of Diagonalizable Matrices on a Quantum Computer},
  volume = {3},
  ISSN = {2643-6817},
  url = {http://dx.doi.org/10.1145/3527845},
  DOI = {10.1145/3527845},
  number = {4},
  journal = {ACM Transactions on Quantum Computing},
  publisher = {Association for Computing Machinery (ACM)},
  author = {Shao,  Changpeng},
  year = {2022},
  month = jul,
  pages = {1–20}
}

@inproceedings{Low_2024,
  title = {Quantum Eigenvalue Processing},
  url = {http://dx.doi.org/10.1109/FOCS61266.2024.00070},
  DOI = {10.1109/focs61266.2024.00070},
  booktitle = {2024 IEEE 65th Annual Symposium on Foundations of Computer Science (FOCS)},
  publisher = {IEEE},
  author = {Low,  Guang Hao and Su,  Yuan},
  year = {2024},
  month = oct,
  pages = {1051–1062}
}

@book{Nielsen_Chuang, 
    place={Cambridge}, 
    title={Quantum Computation and Quantum Information: 10th Anniversary Edition}, 
    DOI={10.1017/CBO9780511976667}, 
    publisher={Cambridge University Press}, 
    author={Nielsen, M. A. and Chuang, I. L.}, 
    year={2010}
}

@bibnote{derivative,
 note = {It is straightforward to observe that by inspecting the mapping $f:\mathbb{C}^{N\times N}\to\mathbb{C}^{N\times N}$, given by $f(\mathbf A)=\mathbf A^{-1}$, with the domain restricted to invertible matrices.
    Its directional derivative in direction $\mathbf H$ is given by $f'(\mathbf A)(\mathbf H)=-\mathbf A^{-1}\mathbf H\mathbf A^{-1}$, we choose unit vectors $\vec{e}$ and $\vec{f}$ so that $A\vec{e}=\sigma\vec{f}$, where $\sigma$ is the smallest singular value of $\mathbf A$.
    Then, for $\mathbf H=\vec{f}\wedge\vec{e}$ we have $f'(\mathbf A)(\mathbf H)\vec{f}=\sigma^{-2}\vec{e}$, where $\wedge$ signifies outer product between vectors.}
}

@article{Yoder_2014,
  title = {Fixed-Point Quantum Search with an Optimal Number of Queries},
  volume = {113},
  ISSN = {1079-7114},
  url = {http://dx.doi.org/10.1103/PhysRevLett.113.210501},
  DOI = {10.1103/physrevlett.113.210501},
  number = {21},
  journal = {Physical Review Letters},
  publisher = {American Physical Society (APS)},
  author = {Yoder,  Theodore J. and Low,  Guang Hao and Chuang,  Isaac L.},
  year = {2014},
  month = nov 
}

\appendix

\section{Remarks on the precision}\label{app:accuracy}
Thm.~\ref{thm:quantum_algorithm_square} consists of two main phases: internal QPE and subsequent amplitude amplification. 
For AA, determination of the optimal number of necessary steps $\xi$ is related to the number of grid points that satisfy our condition. 
Nonetheless, since this is a priori not known, it is possible to fall back to a naive method: choose a random number $\xi_1$ from $\Big\{1, \cdots, \left \lfloor{\pi/4\sqrt{NK}}\right \rfloor\Big\}$. 
The amplitude amplification after $\xi_1$ Grover rounds has expectation value $\mathcal{O}(1)$. 
Therefore, after applying this subroutine a constant number of times for different random $\xi_i$, we can boost the amplitude to any number close to 1, achieving the quadratic speedup.
Essentially, this corresponds to a standard Grover search with an unknown number of marked states.

For QPE, there are two hidden parameters that we have not discussed before, $\omega$ and $\iota$. 
The first, $\omega<1/2 $ is defined such that the probability of obtaining a correct answer, after application of our algorithm, is at least $1-\omega$. 
The second number, $\iota $, determines the precision: if we measure for the state $\ket{\Gamma^{(0)}} $ the qubit $\chi$ with result $\ket{1}_\chi$ then, as the result of the QPE we obtain a number $\lambda_i$.
This number approximates the true eigenvalue, up to accuracy $\iota \varepsilon$.
Similarly, if the measured state is  $\ket{0}_\chi$, then the corresponding eigenvalue is outside of the said range $\pm \iota \varepsilon$.
The number $\iota $ can be adjusted to the total accuracy, yielding costs to be
\begin{equation}\label{eq:without_omega}
    \mathcal{O}\bigg( \frac{\sqrt{NK}}{\iota\varepsilon}\log(KN)\log\big(1/(\iota\varepsilon)\big)\bigg),
\end{equation}
under the assumption of constant $\omega$.

Moreover, to apply QPE, we need to rescale the spectrum.
We do so by transforming the original operator $\widetilde{\mathbf M}$ to $e^{i(c\widetilde{\mathbf M}+d\,\mathbb{I})}$ where $c,d\in\mathbb{R}$ are chosen in such way that the eigenvalues of $c\,\widetilde{\mathbf M}+d\,\mathbb{I}$ are in $[\iota\varepsilon,2\pi-\iota\varepsilon]$.
Here, we choose $c$ as large as possible to have the best resolution.
In turn, we can choose $\iota =1/4$ to find small eigenvalues properly. 

Expression~\eqref{eq:without_omega} assumes constant $\omega$, but it might be an important factor, depending on the usage of the algorithm.
Using naive QPE, we could obtain the probability that the phase is wrong to be smaller than $\omega/(KN)$ as there are $KN$ eigenvectors in total. 
Then, the cost of obtaining the state before amplitude amplification $\ket{\Gamma^{(0)}}$ would be $\Od{\log{(KN/\omega)}\log(1/\varepsilon)/\varepsilon}$. 
However, to obtain $\alpha_i$ with a probability higher than 1/2, a better way would be to apply an iterative approach.
First, we could choose parameters so that $\omega$ is not small -- then one could check if $\widetilde{\mathbf M}$ has an eigenvalue in the interval of interest. 
The number of correct eigenvalues is asymptotically proportional to the number of grid points $K$.   
Thus, the check costs  $\Od{\log{(N)}\sqrt{N}\log(1/\varepsilon)/\varepsilon}$, without the $K$-dependence.
To reach probability $1-\omega$ for an arbitrary $\omega$, we would repeat it $\Od{\log(1/\omega)}$. 
Therefore, the entire cost is
\begin{equation}
    \mathcal{O}\Big((\sqrt{K}\log(KN) +\log(1/\omega)\log N)\sqrt{N}\log(1/\varepsilon)/\varepsilon\Big),
\end{equation} 
making the connection between $K$ (stemming from AA) and $\omega$ (QPE part) additive, as opposed to multiplicative, coming from the naive approach.
We omit this technique in the main text as the scaling is equivalent from the asymptotic $\widetilde{\mathcal{O}}$ perspective.

With the median trick~\cite{Kerzner_2024}, we are able to lower $\omega$ at acceptable costs of increasing the number of controlled qubits in QPE $b$. 
By applying QPE to $c$ different registers, we multiply the total qubit count to $cb$, but lowering the probability of error. 
If, for a single QPE call, the $\omega_{QPE}$ is constant and smaller than 0.5, then using median trick $c$ times reduces it in an exponential fashion, so $c = \mathcal{O}\big(\log(1/\omega_{QPE})\big)$.

\section{Background removal}\label{app:background_removal}
In the landscape scanning method for solving the collocation problem, an improvement of the sensitivity of the cutoff accuracy $\varepsilon$ can be found via ``background'' removal. 
By this, we mean recognizing that the minimal singular value of a linear combination of two matrices depends approximately linearly on their mixing coefficient, $\min \sigma(\mathbf Z + x \mathbf Y) \approx ax+b$, for $x$ in our region of interest.

To exploit this, we first determine the dependence of the singular value on the parameter $\alpha$ (i.e., the slope $s$ of the curve in $\alpha = 0$) and remove it, thereby enhancing the method’s sensitivity to localizing minima in the singular value. 
\begin{equation}\label{eq:background_removal}
    \sigma_\mathrm{min}(\alpha) \mapsto \sigma'_\mathrm{min}(\alpha) \coloneqq  \sigma_\mathrm{min}(\alpha) - s \alpha.
\end{equation}

Overall, the action of this transformation is presented in Fig.~\ref{fig:background_removal}.

\onecolumngrid
\begin{center}
\begin{figure}[h!]
\begin{tikzpicture}
        \node at (0,0) {\includegraphics[width=0.99\linewidth]{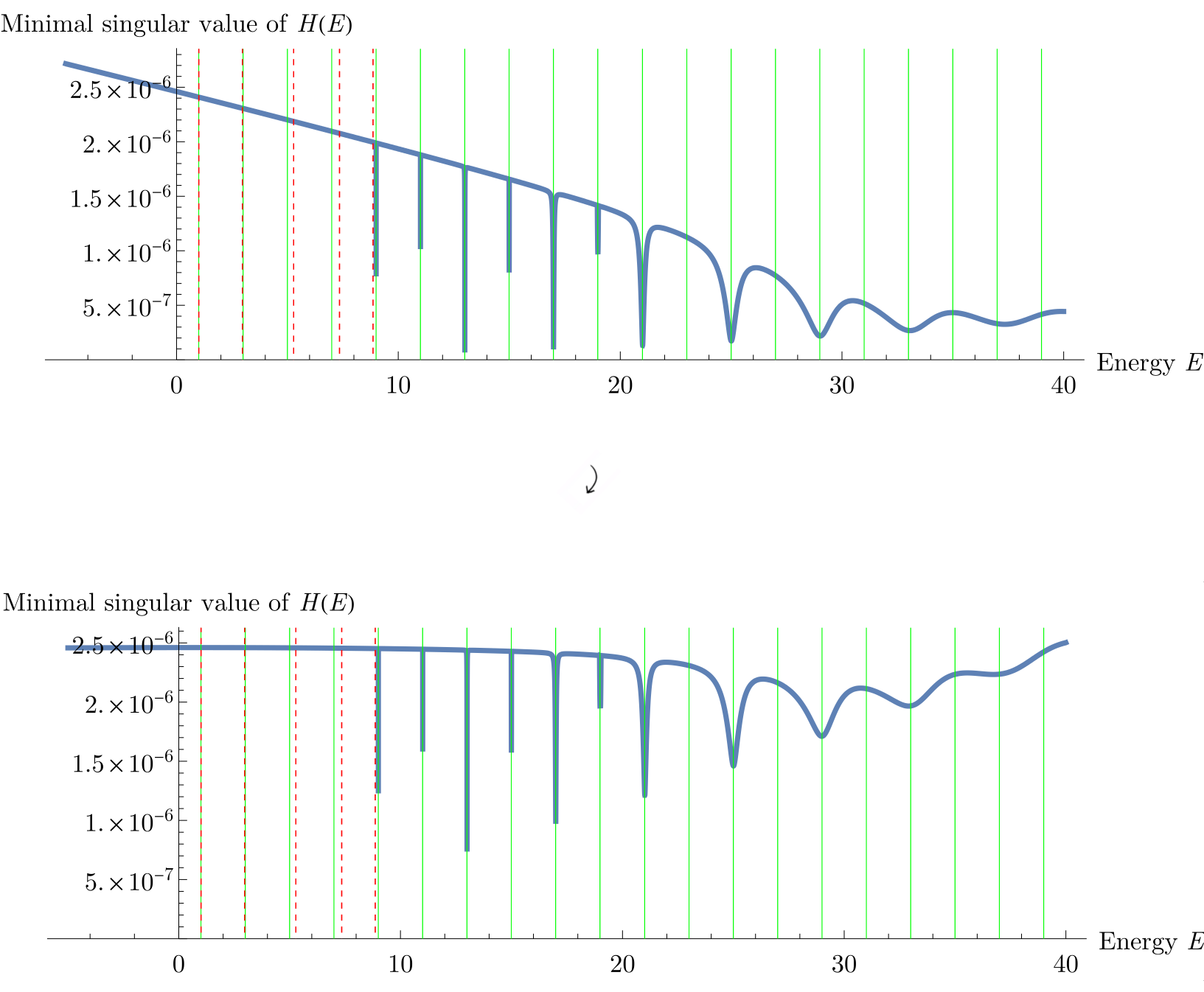}};
        
        \node at (-9.1,7.5) {(a)};
        \node at (-9.1,-1.) {(b)};
\end{tikzpicture}
        \caption{We extend the validity of the cutoff $\varepsilon$ via background removal.
    In (a), the minimal singular value is found straightforwardly, while in (b), we perform the transformation given by Eq.~\eqref{eq:background_removal}.
    Observe that, without this transformation, different values of $\varepsilon$ must be chosen for energies below 12 and above 20, while the background removal allows for a global choice.
    In this figure, we study a harmonic oscillator potential with 26 basis functions and 80 grid points.}\label{fig:background_removal}
\end{figure}
\end{center}

\end{document}